\documentclass[12pt]{article}

\usepackage{bm}
\usepackage{amsmath,color}
\usepackage{amssymb,enumerate}
\usepackage{epsfig,graphicx}
\usepackage{amsthm}
\usepackage{booktabs}
\usepackage{caption}

\usepackage{amsmath}
\usepackage{natbib}
\usepackage{url} 

\usepackage[ruled,linesnumbered]{algorithm2e}
\usepackage{graphicx,epstopdf}
\usepackage[monochrome]{xcolor}

\newtheorem{corollary}{Corollary}[section]
\newtheorem{theorem}{Theorem}[section]

\newtheorem{assumption}{Assumption}

\newtheorem{lemma}{Lemma}

\newcommand{\blind}{0}

\addtolength{\oddsidemargin}{-.5in}%
\addtolength{\evensidemargin}{-.5in}%
\addtolength{\textwidth}{1in}%
\addtolength{\textheight}{1.3in}%
\addtolength{\topmargin}{-.8in}%

\begin{document}

\def\spacingset#1{\renewcommand{\baselinestretch}%
{#1}\small\normalsize} \spacingset{1}


\if0\blind
{
  \title{\bf Stability Approach to Regularization Selection for Reduced-Rank Regression}
    \author{Canhong Wen, Qin Wang\\
    International Institute of Finance, School of Management,\\ University of Science and Technology of China\\\\
 	Yuan Jiang\\
 	Department of Statistics, Oregon State University\\}
  \maketitle
} \fi

\if1\blind
{
  \bigskip
  \bigskip
  \bigskip
  \begin{center}
    {\LARGE\bf Stability approach to regularization selection for reduced-rank regression}
\end{center}
  \medskip
} \fi

\begin{abstract}
The reduced-rank regression model is a popular model to deal with multivariate response and multiple predictors, and is widely used in biology, chemometrics, econometrics,  engineering, and other fields. In the reduced-rank regression modelling, a central objective is to estimate the rank of the coefficient matrix that represents the number of effective latent factors in predicting the multivariate response. Although theoretical results such as rank estimation consistency have been established for various methods, in practice rank determination still relies on information criterion based methods such as AIC and BIC or subsampling based methods such as cross validation. Unfortunately, the theoretical properties of these practical methods are largely unknown. In this paper, we present a novel method called StARS-RRR that selects the tuning parameter and then estimates the rank of the coefficient matrix for reduced-rank regression based on the stability approach. We prove that StARS-RRR achieves rank estimation consistency, i.e., the rank estimated with the tuning parameter selected by StARS-RRR is consistent to the true rank. Through a simulation study, we show that StARS-RRR outperforms other tuning parameter selection methods including AIC, BIC, and cross validation as it provides the most accurate estimated rank. In addition, when applied to a breast cancer dataset, StARS-RRR discovers a reasonable number of genetic pathways that affect the DNA copy number variations and results in a smaller prediction error than the other methods with a random-splitting process.
\end{abstract}

\noindent
{\it Keywords:} Rank estimation consistency; Reduced-rank regression; Stability approach; Tuning parameter selection

\spacingset{1.5} 

\section{Introduction}

To model the relationship between two sets of multivariate data is of great significance in both theory and practice. One commonly used model for two sets of multivariate data is the multivariate regression model, $y = C x + e$, which assumes a linear relationship between the multivariate predictor $ x $ and multivariate response $y$, with a coefficient matrix $C$ and random errors $e$. An additional natural assumption of this model is that the coefficient matrix $C$ has a low-rank structure, which further defines the reduced-rank regression model
\begin{equation}
y = Cx + e,\quad \text{rank}(C) \le k,
\end{equation}
where $k$ is a given integer. Since \citet{anderson1951estimating} proposed the reduced-rank regression model, it has been widely used in biology, econometrics, image science and many other fields \citep{anderson2002reduced,kobak2019sparse,wen2020co}.

Many methods have been proposed to estimate the coefficient matrix in reduced-rank regression. Among them, \citet{bunea2011optimal} proposed a penalized least squares estimator of $C$ with the $l_0$-norm penalty of the singular values of $C$, based on the fact that the rank of a matrix is equal to the number of its non-zero singular values. Motivated by the application of the $l_1$-norm penalty to variable selection, \citet{yuan2007dimension} proposed a penalized least squares estimator of $C$ with the nuclear norm penalty of the coefficient matrix, which is the $l_1$-norm of the singular values of $C$. To reduce estimation bias caused by the nuclear norm penalty, \citet{chen2013reduced} introduced an adaptive nuclear norm penalty and showed that their estimator achieves a flexible bias-variance trade-off: a large singular value receives a small penalty to control bias, and a small singular value receives a large penalty to induce sparsity. Other reduced-rank regression methods include sparse reduced-rank regression that further imposes sparsity on the columns of the coefficient matrix \citep{chen2012sparse} and co-sparse reduced-rank regression that imposes sparsity on both the rows and columns of the coefficient matrix \citep{wen2020co}.

In the study of low-rank matrix estimation, rank determination has always been a key issue \citep{kanagal2010rank,ashraphijuo2017rank,kong2020a}. The influence of rank determination on coefficient estimation has been studied by \citet{anderson2002specification}: when the rank is underestimated, it will lead to estimation bias; when the rank is overestimated, variance of the estimator can be unnecessarily large. To effectively determine the rank of the coefficient matrix, all the aforementioned methods rely on tuning parameters that need to be selected. In fact, theoretical results such as rank estimation consistency have been established  \citep{bunea2011optimal,chen2013reduced} when the tuning parameters are chosen to meet certain theoretical conditions. However, these conditions usually involve unknown model parameters and thus are hard to verify in real applications. In practice, it still relies heavily on empirical methods to select tuning parameters, such as information criteria and subsampling methods.

The most well-known information criteria are probably the Akaike Information Criterion (AIC, \citealt{akaike1974a}) and the Bayesian Information Criterion (BIC, \citealt{schwars1978estimating}). They are both widely used in reduced-rank regression \citep{corander2004bayesian,chen2012reduced,bernardini2015macroeconomic}. In addition, generalized cross validation (GCV) and generalized information criterion (GIC), proposed by \citet{golub1979generalized} and \citet{fan2013tuning} respectively for linear models, are also often used in reduced-rank regression \citep{yuan2007dimension, chen2016model}. {\color{blue}\citet{she2017selective} considered a selective reduced-rank regression model that possesses both low-rank and sparse structure, and proposed a predictive information criterion (PIC) for tuning parameter selection. This method can also be applied to reduced-rank regression.} Other information criteria include an extension of BIC for high-dimensional data named BICP \citep{an2008stepwise}. The subsampling methods are represented by cross validation, which has also been widely used in variable selection and reduced-rank regression \citep{mukherjee2011reduced,ulfarsson2013tuning,jiang2016variable}. Although these methods have been repeatedly used, their theoretical properties are largely unknown and still need to be studied. One such effort is \citet{she2019on} that investigated cross validation for sparse reduced-rank regression and found that conventional cross validation may be associated with inconsistent models on different training sets. Instead of cross validating the tuning parameter, \citet{she2019on} proposed to cross validate the sparsity structure to maintain the same model in different trainings and validations.

Another type of subsampling methods for high-dimensional data is called stability approach. Different from cross validation and generalized cross validation that evaluate the prediction accuracy of a model, stability approaches focus on the stability of a model across subsamples. Stability selection \citep{meinshausen2010stability} is the first stability approach proposed for high-dimensional data and has a wide range of applicability, such as variable selection, graphical modelling, and cluster analysis. Furthermore, \citet{liu2010stability} proposed the Stability Approach to Regularization Selection (StARS) method to select the tuning parameter in graphical models. Compared to other tuning parameter selection methods such as AIC, BIC, and cross validation, StARS enjoys both theoretical and empirical advantages in graphical models. Further studies of stability approaches include \citet{shah2013variable}, \citet{yu2013stability}, \citet{sun2013consistent}, among others. Although the stability approach has been shown powerful for variable selection and graphical models, its theoretical and empirical performances have yet to be studied in reduced-rank regression. This is the main motivation of our work.

In this article, we propose a new tuning parameter selection method for reduced-rank regression based on stability approach. To appropriately apply the idea of stability approach, we define a new concept of instability specially for the reduced-rank regression models, that is, the sample variance of the estimated rank from the subsamples. In adjunct with the newly defined instability, we propose a new algorithm to select the tuning parameter based on the behavior of the instability along a grid of increasing tuning parameters. We call the new method the Stability Approach to Regularization Selection for 
Reduced-Rank Regression (StARS-RRR). Theoretically, we establish the consistency of rank estimation for StARS-RRR: the estimated rank is equal to the true rank with probability tending to one, a result that is stronger than the partial sparsistency property established for StARS \citep{liu2010stability}. Empirically, we show that StARS-RRR outperforms information criteria and other subsampling methods for both simulated and real data. In simulated data, StARS-RRR recovers the rank correctly in most replications and leads to the smallest bias as long as the signal to noise ratio is not extremely small; in real data, StARS-RRR discovers a reasonable number of relationships between copy number variations and gene expressions among breast cancer patients and results in a smaller prediction error with a random-splitting process than the other methods.

\section{Methodology and Algorithm}\label{secrrr}

\subsection{Adaptive nuclear norm penalization}

In the following subsections, we will introduce the methodology and algorithm for StARS-RRR. As seen later, StARS-RRR is a general framework that can be applied to any reduced-rank regression method. For the purpose of illustration, we will use the adaptive nuclear norm penalization method \citep{chen2013reduced} as an example to introduce StARS-RRR.

Denote $Y=(Y_1,\ldots, Y_n)^T \in \mathbb{R}^{n\times q}$ as the response, $X=(X_1,\ldots, X_n)^T \in \mathbb{R}^{n\times p}$ as the design matrix, and assume that they follow a multivariate linear model
\begin{equation}
Y = X C + E, \label{eqn:mlm}
\end{equation}
where $C \in \mathbb{R}^{p\times q}$ is the coefficient matrix that is often assumed to have a low rank and $E \in \mathbb{R}^{n\times q}$ is the error matrix. The adaptive nuclear norm penalization method aims to estimate the coefficient matrix $C$ by considering the following optimization problem
\begin{equation}
\hat{C}_{\lambda}=\mathop{\arg\min}\limits_{C\in\mathbb{R}^{p\times q}}\left\{\frac{1}{2}\lVert Y-XC\rVert_{F}^{2} + \lambda\lVert XC\rVert_{\ast w}\right\}. \label{eqn:ann}
\end{equation}
In \eqref{eqn:ann}, $\|\cdot\|_F$ denotes the Frobenius norm and $ \| XC \|_{\ast w}=\sum_{i=1}^{p\wedge q}w_{i}d_{i}(XC) $ denotes the adaptive nuclear norm of the matrix $ XC $ with $ w_{i}=d_{i}^{-\gamma}(PY)$, where $ d_{i}(M)$ denotes the $i$-th largest singular value of a matrix $M$, $P = X (X^T X)^- X^T$, and $\gamma\ge 0 $. The above optimization leads to an explicit form for the rank of the estimated coefficient matrix $\hat{C}_{\lambda}$ \citep{chen2013reduced}:
\begin{equation}
\hat{r}_{\lambda}= \max\{r: d_r(PY) > \lambda^{1/(\gamma + 1)}\}.\label{equ:est-r}
\end{equation}

Based on the above explicit form, it was shown that the adaptive nuclear norm penalization method recovers the true rank of $C$ with high probability if the error matrix  $E$ has independent $N(0,\sigma^2)$ entries and the tuning parameter $\lambda$ satisfies the condition $\lambda=\left\{(1+\theta) \left(\sqrt{r_\mathrm{x}}+\sqrt{q}\right) \sigma/ \delta\right\}^{\gamma+1}$ for some $\theta>0$, where $r_\mathrm{x}$ is the rank of $X$ and $\delta$ is a constant depending on the singular values of $XC$. However, this theoretical result may not be used to select $\lambda$ in practice because it involves the unknown parameters $\sigma$ and $\delta$. Thus, it is essential to use a practical and data-driven procedure to select the tuning parameter $\lambda$.

\subsection{Tuning parameter selection for reduced-rank regression}

In this subsection, we will review existing methods to select tuning parameters for reduced-rank regression. Roughly, there are two types of tuning parameter selection methods, one based on information criteria and the other based on subsampling technique. Given a tuning parameter $\lambda$, denote the estimator of \eqref{eqn:ann} as $ {\hat C}_{\lambda} $, and the corresponding rank as $\hat r_{\lambda}$. For an information criterion based method, one chooses an optimal $\lambda$ by minimizing a function of the sum of squared error and the degree of freedom. Table~\ref{tab:loss} shows some widely used information criteria.

\iftrue
\begin{table}
	\caption{\label{tab:loss}Information criterion based methods for tuning parameter selection in \eqref{eqn:ann}.}
	\centering
	\begin{tabular}{ll}
		\hline
		Information Criterion   &  Mathematical Formula  \\
		\hline
		AIC    &    $ nq\log\left(\dfrac{\|Y- X {\hat C}_{\lambda} \|_{\mathrm{F}}^2}{nq}\right)+2\hat r_{\lambda}(r_\mathrm{x}+q-\hat r_{\lambda})  $ \\
		BIC    &   $ nq\log\left(\dfrac{\|Y- X {\hat C}_{\lambda} \|_{\mathrm{F}}^2}{nq}\right)+\log(nq)\hat r_{\lambda}(r_\mathrm{x}+q-\hat r_{\lambda}) $ \\
		GIC    &   $ nq\log\left(\dfrac{\|Y- X {\hat C}_{\lambda}\|_{\mathrm{F}}^2}{nq}\right)+\log\log(nq)\log(pq)\hat r_{\lambda}(r_\mathrm{x}+q-\hat r_{\lambda}) $ \\
		BICP   &   $ nq\log\left(\dfrac{\|Y- X {\hat C}_{\lambda} \|_{\mathrm{F}}^2}{nq}\right)+2\log(pq)\hat r_{\lambda}(r_\mathrm{x}+q-\hat r_{\lambda}) $ \\
		GCV    &   $ \dfrac{nq \|Y- X {\hat C}_{\lambda} \|_{\mathrm{F}}^2}{[nq-\hat r_{\lambda}(r_\mathrm{x}+q-\hat r_{\lambda})]^2} $  \\
		{\color{blue}PIC}    &   {\color{blue}$\dfrac{\| Y-X\hat C_\lambda\|^2_F}{nq-2\hat r_\lambda(r_X+q-\hat r_\lambda)}$}  \\
		\hline
	\end{tabular}
\end{table}
\fi

Information criteria play an important role in model selection and model assessment, generally composed of an error term and a degree of freedom term to balance the goodness of fit and the model complexity. They are a popular choice for tuning parameter selection in reduced-rank regression, yet they may not be able to accurately estimate the rank when the true rank is low \citep{velu2013multivariate}. Information criteria like AIC and BIC tend to overestimate the rank and their performance deteriorates with small sample size. In addition to AIC and BIC, other information criterion such as GCV, GIC, and BICP have also been used to estimate rank in reduced-rank regression \citep{yuan2007dimension, chen2016model}, since they were initially proposed for linear models. {\color{blue}Proposed for selective reduced-rank regression models with both low-rank and sparse structures, PIC can also be simplified for reduced-rank regression models to the form in Table~\ref{tab:loss}. For PIC, \citet{she2017selective} established a non-asymptotic oracle inequality for its estimation error rate without assuming an infinite sample size.}

The subsampling methods include cross validation, which is also one of the most popular methods applied in reduced-rank regression \citep{ulfarsson2013tuning, kobak2019sparse}. $K$-fold cross validation randomly divides the data set into $K$ subsets of the same size. Each time one subset is reserved as the test set and the other $K-1$ subsets serve as the training set. Then, the model with a given tuning parameter is trained on the training set and tested on the test set. Last, the average test error of the $K$ models is taken as the cross validation score. The tuning parameter with the smallest cross validation score is taken as the selected parameter. \citet{she2019on} investigated cross validation for sparse reduced-rank regression and found that conventional cross validation may be associated with inconsistent models on different training sets. Instead of cross validating the tuning parameter, \citet{she2019on} proposed to cross validate the sparsity structure to maintain the same model in different trainings and validations.

Although information criteria and subsampling methods have been widely used in reduced-rank regression, their theoretical properties are largely unknown and still need to be studied. A tuning parameter selection method for reduced-rank regression that has a solid theoretical property is imperative for practical use.

\subsection{StARS-RRR} \label{sec:sta}

We hereby propose a novel tuning parameter selection method called StARS-RRR for reduced-rank regression based on the stability approach. The stability approach was originally introduced for variable selection and graphical models \citep{meinshausen2010stability, liu2010stability}. In particular, we borrow the idea from StARS \citep{liu2010stability}. The key ingredient of StARS is to define a measure called instability of an estimator from randomly drawn subsamples of the original data. For instance, \citet{liu2010stability} proposed to use the variance of the Bernoulli indicator of an edge in a graph, averaged over all edges and all estimated graphs from randomly drawn subsamples.

Before presenting the definition of instability for StARS-RRR, we introduce some notation. Let $b = b(n)$ be such that $1 < b(n) < n$. We draw $N$ random subsamples $S_1,\ldots,S_N$ from $\{(X_1,Y_1),\ldots,(X_n,Y_n)\}$ without replacement, each of size $b$. Theoretically, there are $N = \binom{n}{b}$ such subsamples. However, \citet{politis1999subsampling} argues that it suffices in practice to choose a large number $N$ of subsamples at random. Based on the $i$-th subsample $S_i$, we derive the adaptive nuclear norm penalization estimator, $\hat{C}_\lambda(S_i)$, from \eqref{eqn:ann}, with rank $\hat{r}_{\lambda}(S_i)$ for a pre-specified tuning parameter $\lambda > 0$.

As our objective is to select a tuning parameter $\lambda$ so that the estimated rank of $C$ is close to the true rank, a natural definition of instability arises from the variation of the estimated ranks from the subsamples. Therefore, we define the instability corresponding to $\lambda$ as the sample variance of $\{\hat{r}_{\lambda}(S_i): i = 1,\ldots,N\}$:
\begin{equation}\label{eqn:ins}
\hat{D}(\lambda) = S^{2}(\hat{r}_{\lambda})  = \frac{1}{N-1} \left[ \sum_{i=1}^N \hat{r}^2_{\lambda}(S_i) - N \left\{\frac1N \sum_{i=1}^N \hat{r}_{\lambda}(S_i) \right\}^2 \right].
\end{equation}

To see how the instability varies for different values of $\lambda$, we demonstrate via a simulated dataset. The dataset was simulated under the setting that is detailed in Section \ref{sec:sim}, where we note that the true rank of $C$ equals $10$. We draw 100 subsamples to calculate the instability defined in (\ref{eqn:ins}). Figure~\ref{fig:toy} shows the instability $\hat D(\lambda)$ over a range of values of $\log(\lambda)$ as well as the estimated rank $\hat{r}_{\lambda}$ based on the full data, from which we observe the following patterns.

\begin{itemize}
	\item[(i)] When $\lambda$ is small such that the estimated rank is larger than the true rank, the instability fluctuates but there is always a substantial gap from 0.
	\item[(ii)] In a sub-interval of the values of $\lambda$ where the estimated rank is equal to the true rank, the instability stays at zero.
	\item[(iii)] When $\lambda$ is large such that the estimated rank is less than the true rank, the instability fluctuates again and it can decrease to zero at certain values of $\lambda$.
\end{itemize}

The above observations motivate us to search from small to large for the first tuning parameter with which the instability is small enough. To achieve this objective, we consider a sequence of tuning parameters $\Lambda = \{\lambda_1, \ldots, \lambda_K\}$ where $\lambda_1 < \cdots < \lambda_K$. Based on this sequence of tuning parameters, we define the cumulative minimum instability for any $\lambda \in \Lambda$:
\begin{equation}\label{eqn:cmi}
\bar D(\lambda) =\min\{\hat D(\lambda'): \lambda' \in\Lambda,\  \lambda_{1}\leq \lambda' \leq\lambda\}.
\end{equation}
Then, we select the optimal tuning parameter as
\begin{equation*}
\hat\lambda = \min\{\lambda \in\Lambda: \bar D(\lambda) \le \eta\},
\end{equation*}
where $\eta$ is a small pre-specified threshold. In Section \ref{sec:the}, we will show that $\eta$ is interpretable as it is an explicit function of a key quantity in the theoretical property of StARS-RRR. Therefore, we are not simply replacing the problem of selecting $\lambda$ with the problem of selecting $\eta$. 

\iftrue
\begin{figure}
	\centering
	\includegraphics[width=\textwidth]{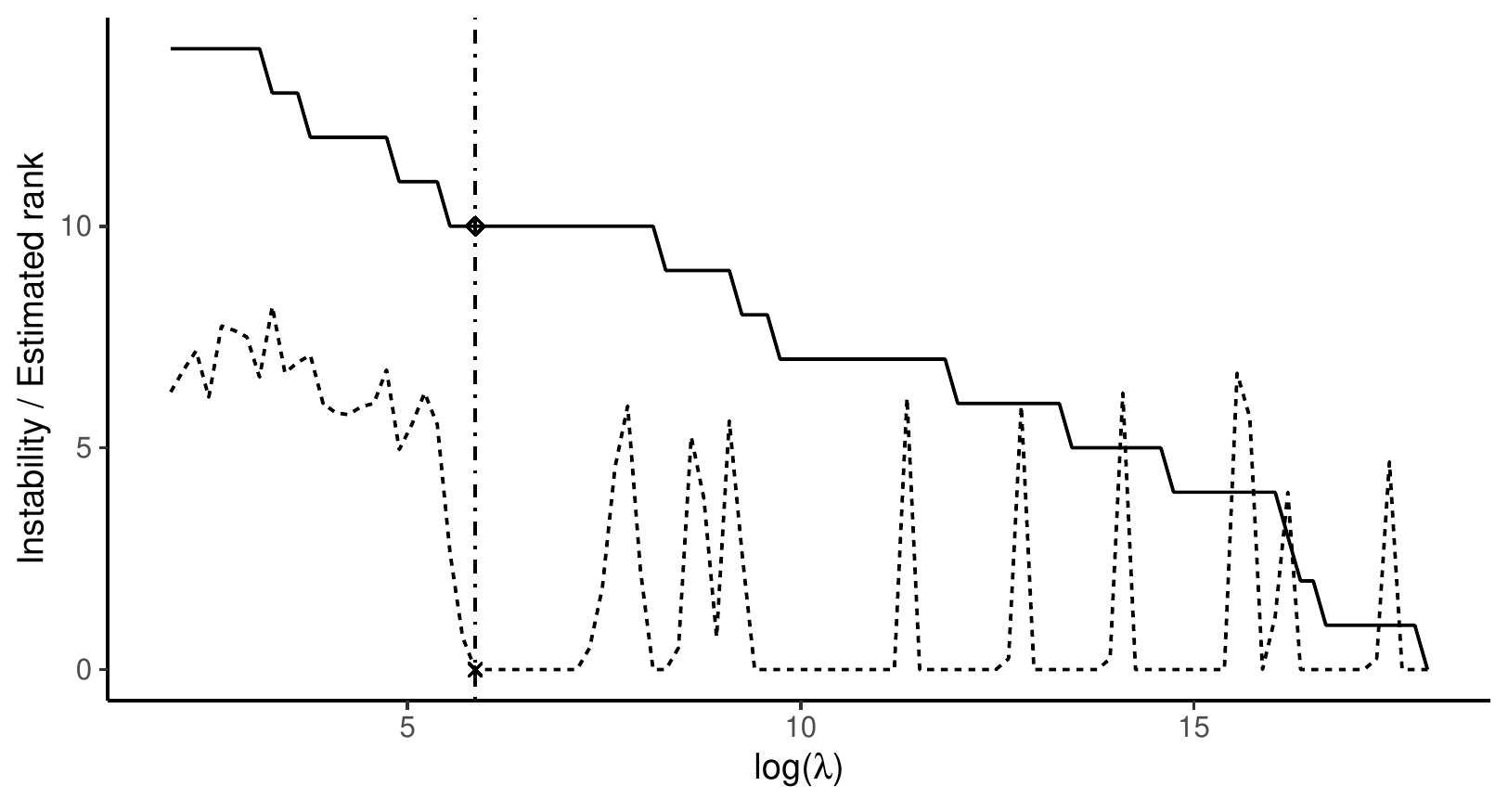}
	\caption{\label{fig:toy}The dotted line represents the instability relative to $\log(\lambda)$ and the solid line represents estimated rank from the full data relative to $\log(\lambda)$. In this simulated dataset, SNR is $ 2.047 $ and true rank is $ 10 $.}
\end{figure}
\fi

It is noteworthy that the selection rule in StARS-RRR is different from that in StARS \citep{liu2010stability}. When StARS was applied in graphical models, it starts with a large tuning parameter (or equivalently, an empty graph) and selects the first tuning parameter that achieves an instability lower than a threshold. However, StARS-RRR starts with a small tuning parameter (or equivalently, a high estimated rank) and selects the first tuning parameter that achieves an instability lower than a threshold. The reason why StARS-RRR searches the tuning parameter in a different direction is that, when the tuning parameter is too large so that the rank is underestimated, the estimated rank can still be stable across subsamples due to its discreteness. Therefore, if we start with a large tuning parameter and search for a stable estimated rank, we may select a tuning parameter that is too large and leads to an underestimated rank.

To demonstrate the performance of StARS-RRR, we indicate the selected tuning parameter in the previous simulated dataset as the vertical line in Figure~\ref{fig:toy}. It is seen that StARS-RRR selects $\lambda$ correctly such that the estimated rank is the same as the true rank. To summarize, we present the StARS-RRR method as in Algorithm \ref{alg1}.

\begin{algorithm}
	\caption{StARS-RRR}
	\label{alg1}
	\KwIn{Observations $ \{X_{i},Y_{i}\}_{i=1}^{n}$;
		Number of subsamples $N$;
		Size of subsample $b$;
		Threshold $\eta$;
		A list of candidate tuning parameters in an increasing order, $\Lambda = \{\lambda_{1},\ldots,\lambda_{K}\}$;}
	\KwOut{Optimal tuning parameter $\hat\lambda$;}
	Draw $N$ subsamples $S_j, j=1, \ldots, N$ of size $b$ from  $\{X_{i}, Y_{i}\}_{i=1}^{n}$ without replacement\;
	\For{each $k=1, \ldots,K$}{
		\For{each $j=1, \ldots, N$}{
			Apply the adaptive nuclear norm penalization method on data $S_j$ with tuning parameter $\lambda_k$ to obtain $\hat C_{\lambda_{k}}(S_j)$ with rank $\hat{r}_{\lambda_{k}}(S_j)$;
		}
		Compute the instability $\hat{D}(\lambda_{k})$ and then the cumulative minimum instability $ \bar{D}(\lambda_{k})$ by \eqref{eqn:ins} and \eqref{eqn:cmi}, respectively\;
		\If {$\bar{D}(\lambda_{k}) \le \eta$}{
			$\hat\lambda=\lambda_{k}$\;
			\textbf{break}
		}
	}
\end{algorithm}

Although our previous discussion is based on the adaptive nuclear norm penalization method for reduced-rank regression, StARS-RRR can be applied to any reduced-rank regression estimation methods that is not limited to this method. Furthermore, the proposed framework is not even limited to the reduced-rank regression model, it can be used for general matrix estimation problems as long as the objective is to determine the rank of a matrix.

\section{Rank Estimation Consistency}\label{sec:the}

In this section, we will show that the estimated rank from StARS-RRR is consistent to the true rank: the estimated rank is equal to the true rank with probability tending to one. In addition, we will show that there is an explicit relationship between the threshold $\eta$ employed in the StARS-RRR algorithm (Algorithm \ref{alg1}) and the probability that the rank is estimated correctly. Therefore, the threshold $\eta$ is an interpretable quantity and we are not simply replacing the problem of choosing $\lambda$ with the problem of choosing $\eta$. Different from the consistency result in the literature that requires a theoretical tuning parameter depending on unknown model parameters \citep{bunea2011optimal,chen2013reduced}, our result is practically more useful as it can be used to determine the tuning parameter in real applications.


Before presenting the consistency of rank estimation for StARS-RRR, we introduce some notation and impose necessary assumptions as follows. For any two real numbers $u$ and $v$, let $u\wedge v = \min(u,v)$ and $u\vee v = \max(u,v)$. Recall that $X$ and $Y$ are the design matrix and responses in the multivariate linear model in \eqref{eqn:mlm}. Further, let $X^b = (X_1,\ldots,X_b)^T$ and $Y^b = (Y_1,\ldots,Y_b)^T$ be the corresponding design matrix and responses from the subsample $\{X_i,Y_i\}_{i=1}^b$. Denote by $r_{\mathrm{x}}$ and  $r_{\mathrm{x}^b}$ the ranks of $X$ and $X^b$, respectively, and let $r^\ast$ be the true rank of the coefficient matrix $C$ in \eqref{eqn:mlm}. Finally, let $\hat{r}^b_{\lambda}$ be the estimated rank based on the subsample $\{X_i,Y_i\}_{i=1}^b$ using the adaptive nuclear norm penalization method \citep{chen2013reduced}. In addition, we impose the following assumptions.

\begin{assumption}
	\label{Assumption-1}
	The error matrix $E$ in (\ref{eqn:mlm}) has independent $N(0, \sigma^2)$ entries.
\end{assumption}
\begin{assumption}
	\label{Assumption-2}
	For a fixed $\theta > 0$, $d_{r^\ast}(X^b C) \ge 2 (1 + \theta) \sigma (\sqrt{r_{\mathrm{x}^b}} + \sqrt{q})$.
\end{assumption}
\begin{assumption}
	\label{Assumption-3}
	$r^{} \le r_{\mathrm{x}^b} \wedge q$ where $q$ is the number of responses in $Y$.
\end{assumption}

Assumptions \ref{Assumption-1} and \ref{Assumption-2} are almost identical to Assumptions 1 and 2 in \citet{chen2013reduced}, except that Assumption \ref{Assumption-2} is on the singular value of $X^b C$ for the subsample $X^b$ instead of the whole sample $X$. {\color{blue}Assumption \ref{Assumption-2} is a common type of assumption for subsampling based methods such as stability approaches. For example, Assumption (A2) in the original StARS paper \citep{liu2010stability} is also imposed on the subsamples of size $b < n$. It requires that from a subsample of size $b < n$ ``all estimated graphs using regularization parameters $\Lambda \ge \Lambda_0$ contain the true graph with high probability.''} Assumption \ref{Assumption-3} is a moderate assumption as long as $b$ is not too small because it is always true that $r^\ast \le r_{\mathrm{x}} \wedge q$. In practice, it is commonly assumed that $r^\ast$ is low for a reduced-rank regression problem.

As follows, we will establish the consistency of rank estimation for StARS-RRR in a few steps. First, we will show that the true variance of $\hat{r}^b_{\lambda}$ follows the patterns (i) and (ii) as observed from Figure~\ref{fig:toy}, i.e., (i) the true variance of $\hat{r}^b_{\lambda}$ stays away from 0 when $\lambda$ is small such that the rank is overestimated and (ii) the true variance of $\hat{r}^b_{\lambda}$ is very close to zero when the estimated rank is equal to the true rank. This result is summarized in Theorem \ref{Theorem-1}.

\begin{theorem}\label{Theorem-1}
	Suppose Assumptions \ref{Assumption-1}--\ref{Assumption-3} hold. For any $\delta \in [\exp\{-\theta^2 (\sqrt{r_{\mathrm{x}^b}} + \sqrt{q})^2 /8\}, 1/2)$ with a large enough $r_{\mathrm{x}^b} + q$, there exist $\lambda_l$, $\lambda_m$, $\lambda_h$ with $0 < \lambda_l \le \lambda_m \le \lambda_h \le [(1 + \theta/2) \sigma (\sqrt{r_{\mathrm{x}^b} } + \sqrt{q})]^{\gamma + 1}$ such that
	
	(a) when $\lambda_h \le \lambda \le [(1 + 3\theta/2) \sigma (\sqrt{r_{\mathrm{x}^b}} + \sqrt{q})]^{\gamma + 1}$, $P(\hat{r}^b_\lambda = r^\ast) \ge 1 - 2\exp\{-\theta^2 (\sqrt{r_{\mathrm{x}^b}} + \sqrt{q})^2 /8\}$ and $\mathrm{var}(\hat{r}^b_{\lambda}) \le 4 (r_{\mathrm{x}^b} \wedge q)^2 \exp\{-\theta^2 (\sqrt{r_{\mathrm{x}^b}} + \sqrt{q})^2 /8\}$, and further, $P(\hat{r}^b_\lambda = r^\ast) \to 1$ and $\mathrm{var}(\hat{r}^b_{\lambda}) \to 0$ as $r_{\mathrm{x}^b} + q \to \infty$;
	
	(b) when $\lambda \ge \lambda_m$, $P(\hat{r}^b_\lambda = r^\ast) \ge 1 - \delta - 2\exp\{-\theta^2 (\sqrt{r_{\mathrm{x}^b}} + \sqrt{q})^2 /8\}$;
	
	(c) when $0 < \lambda \le \lambda_m$, $P(\hat{r}^b_\lambda \ge r^\ast + 1) \ge \delta$, and when $\lambda_l \le \lambda \le \lambda_m$, $\mathrm{var}(\hat{r}^b_\lambda) \ge \delta (1-\delta)$.
\end{theorem}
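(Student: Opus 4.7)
The plan is to exploit the closed-form expression $\hat{r}^b_\lambda = \max\{r : d_r(P^b Y^b) > \lambda^{1/(\gamma+1)}\}$, where $P^b = X^b(X^{bT}X^b)^{-}X^{bT}$ and $P^b Y^b = X^b C + P^b E^b$, and to control the singular values of $P^b Y^b$ via a single Gaussian concentration event. Under Assumption~\ref{Assumption-1}, $P^b E^b$ is a Gaussian matrix supported on the $r_{\mathrm{x}^b}$-dimensional column space of $P^b$, so standard Davidson--Szarek-type concentration for the spectral norm gives $P(A) \ge 1 - 2\exp\{-\theta^2(\sqrt{r_{\mathrm{x}^b}} + \sqrt{q})^2/8\}$, where $A := \{d_1(P^b E^b) \le \sigma(1+\theta/2)(\sqrt{r_{\mathrm{x}^b}}+\sqrt{q})\}$. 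Combining $A$ with Weyl's inequality (using $\mathrm{rank}(X^bC) \le r^*$) and Assumption~\ref{Assumption-2} sandwiches $d_{r^*+1}(P^b Y^b) \le \sigma(1+\theta/2)(\sqrt{r_{\mathrm{x}^b}}+\sqrt{q})$ and $d_{r^*}(P^b Y^b) \ge \sigma(1+3\theta/2)(\sqrt{r_{\mathrm{x}^b}}+\sqrt{q})$ on $A$.

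For part (a), I would set $\lambda_h := [\sigma(1+\theta/2)(\sqrt{r_{\mathrm{x}^b}}+\sqrt{q})]^{\gamma+1}$. For $\lambda$ in the stated range, $\lambda^{1/(\gamma+1)}$ lies in the gap identified above, so on $A$ one reads off $\hat{r}^b_\lambda = r^*$ directly from the definition. The probability bound is then $P(A)$, the variance bound follows from $\mathrm{var}(\hat{r}^b_\lambda) \le E[(\hat{r}^b_\lambda - r^*)^2] \le (r_{\mathrm{x}^b}\wedge q)^2\, P(\hat{r}^b_\lambda \ne r^*)$ using the crude a priori bound $0 \le \hat{r}^b_\lambda \le r_{\mathrm{x}^b}\wedge q$, and the asymptotic claims come from sending $r_{\mathrm{x}^b}+q \to \infty$.

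For parts (b) and (c), I would calibrate $\lambda_m$ and $\lambda_l$ through the non-increasing tail function $F(a) := P(d_{r^*+1}(P^b Y^b) > a)$, by setting $\lambda_m^{1/(\gamma+1)} := \inf\{a > 0 : F(a) \le \delta\}$ and $\lambda_l^{1/(\gamma+1)} := \inf\{a > 0 : F(a) \le 1-\delta\}$. Monotonicity and $\delta < 1-\delta$ give $\lambda_l \le \lambda_m$, while the hypothesis $\delta \ge \exp\{-\theta^2(\sqrt{r_{\mathrm{x}^b}}+\sqrt{q})^2/8\}$ combined with the first step (which yields $F(\sigma(1+\theta/2)(\sqrt{r_{\mathrm{x}^b}}+\sqrt{q})) \le \exp\{-\theta^2(\sqrt{r_{\mathrm{x}^b}}+\sqrt{q})^2/8\} \le \delta$) forces $\lambda_m \le [\sigma(1+\theta/2)(\sqrt{r_{\mathrm{x}^b}}+\sqrt{q})]^{\gamma+1} = \lambda_h$. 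Part~(b) then follows from the decomposition $P(\hat{r}^b_\lambda = r^*) \ge 1 - P(\hat{r}^b_\lambda \ge r^*+1) - P(\hat{r}^b_\lambda \le r^*-1)$: the first term is $F(\lambda^{1/(\gamma+1)}) \le \delta$ by the definition of $\lambda_m$, and the second is handled on $A$ exactly as in part~(a). The first half of part~(c) is the defining property of $\lambda_m$. For the variance lower bound on $[\lambda_l,\lambda_m]$, I would note that $x \mapsto \mathbf{1}\{x \ge r^*+1\}$ is 1-Lipschitz on the non-negative integers, so the iid-copy identity $\mathrm{var}(g(X)) = \tfrac{1}{2}E[(g(X) - g(X'))^2]$ yields
\begin{equation*}
\mathrm{var}(\hat{r}^b_\lambda) \;\ge\; \mathrm{var}\bigl(\mathbf{1}\{\hat{r}^b_\lambda \ge r^*+1\}\bigr) \;=\; F(\lambda^{1/(\gamma+1)})\bigl(1 - F(\lambda^{1/(\gamma+1)})\bigr) \;\ge\; \delta(1-\delta),
\end{equation*}
since $F(\lambda^{1/(\gamma+1)}) \in [\delta, 1-\delta]$ on $[\lambda_l, \lambda_m]$.

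The main obstacle I expect is placing $\lambda_l$ and $\lambda_m$ correctly, with $0 < \lambda_l \le \lambda_m$ together with the claimed upper bound. The right-tail control of $d_{r^*+1}(P^b Y^b)$ is routine once one has the event $A$, but strict positivity of $\lambda_l$ requires that $d_{r^*+1}(P^b Y^b)$ not concentrate at $0$; this is where Assumption~\ref{Assumption-3} is needed, so that $P^b Y^b$ has at least $r^*+1$ non-degenerate singular directions and hence $d_{r^*+1}(P^b Y^b) > 0$ with positive probability. The 1-Lipschitz-to-Bernoulli step for translating a Bernoulli variance bound into a lower bound on $\mathrm{var}(\hat{r}^b_\lambda)$ is the other small but essential technical point.
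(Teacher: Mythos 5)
Your proposal is correct and follows the same skeleton as the paper's proof: spectral-norm concentration for $d_1(P^b E^b)$ (the paper imports this as Lemma~3 of \citet{bunea2011optimal}), Weyl's inequality plus Assumption~\ref{Assumption-2} to separate $d_{r^\ast}(P^bY^b)$ from $d_{r^\ast+1}(P^bY^b)$, and quantile-type definitions of $\lambda_l,\lambda_m$ through the tail function of $d_{r^\ast+1}(P^bY^b)$. Two sub-arguments differ genuinely. First, for the variance upper bound in (a) you use the crude inequality $\mathrm{var}(\hat r^b_\lambda)\le E[(\hat r^b_\lambda-r^\ast)^2]\le (r_{\mathrm{x}^b}\wedge q)^2\,P(\hat r^b_\lambda\ne r^\ast)$, which is cleaner than the paper's direct moment computation and yields the same constant; working on a single good event $A$ also merges the paper's two tail bounds into one. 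Second, and more substantively, for the variance lower bound in (c) you pass from $\hat r^b_\lambda$ to the Bernoulli variable $\mathbf{1}\{\hat r^b_\lambda\ge r^\ast+1\}$ via the $1$-Lipschitz contraction $\mathrm{var}(g(X))=\tfrac12 E[(g(X)-g(X'))^2]\le \mathrm{var}(X)$, whereas the paper writes $\hat r^b_\lambda=\sum_i I\{d_i(P^bY^b)>\lambda^{1/(\gamma+1)}\}$ and checks that all pairwise covariances of these nested indicators are nonnegative, so that $\mathrm{var}(\hat r^b_\lambda)\ge \mathrm{var}(I\{d_{r^\ast+1}(P^bY^b)>\lambda^{1/(\gamma+1)}\})$. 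Both are valid; yours is shorter and applies to any monotone $1$-Lipschitz functional of the estimated rank, while the paper's exploits the specific comonotone structure of the singular-value indicators. Two small caveats you share with the paper rather than introduce: the claim in (c) that $P(\hat r^b_\lambda\ge r^\ast+1)\ge\delta$ \emph{at} $\lambda=\lambda_m$ needs continuity of the law of $d_{r^\ast+1}(P^bY^b)$ (your generalized-inverse definition only gives $\le\delta$ there otherwise), which the paper secures by asserting that this distribution is continuous; and strict positivity of $\lambda_l$ really needs $r^\ast+1\le r_{\mathrm{x}^b}\wedge q$, slightly more than Assumption~\ref{Assumption-3} literally states, a point you correctly flag.
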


On the one hand, part (a) of Theorem \ref{Theorem-1} provides the consistency of the estimated rank from a subsample of the data, which shows the adaptive nuclear norm penalization method is able to identify the correct rank with probability tending to one for an appropriate range of $\lambda$ values. This result is similar to Theorem 3 in \citet{chen2013reduced} with a slight distinction that we identify a range of $\lambda$ values for rank consistency instead of a single value of $\lambda$ as in \citet{chen2013reduced}. It is also obvious that the variance of the estimated rank tends to zero when the rank is correctly identified. On the other hand, the results in parts (b) and (c) provide additional information about the rank estimation when $\lambda$ is smaller. In part (b), when $\lambda \ge \lambda_m$, the adaptive nuclear norm penalization method achieves a slightly weaker result than that in part (a): the probability that the rank is correctly estimated is lowered by $\delta$ compared to part (a), although the estimated rank is still consistent as long as $\delta \to 0$. Part (c) discusses the case when $\lambda$ is even smaller, i.e., $\lambda_l \le \lambda \le \lambda_m$. There are two implications. With the probability at least $\delta$ the rank is overestimated and the variance of the estimated rank has a lower bound so that it stays away from zero. In summary, Theorem \ref{Theorem-1} shows the patterns (i) and (ii) observed in Figure \ref{fig:toy} theoretically for the true variance of the estimated rank from a subsample of the data.

Second, we will show that the sample variance of the estimated ranks from all the subsamples, i.e., the instability as defined in (\ref{eqn:ins}), is very close to the true variance of $\hat{r}^b_{\lambda}$. Hereby, we present this result in Theorem \ref{Theorem-2}.

\begin{theorem}\label{Theorem-2}
	For any $\lambda > 0$ such that $E(\hat{r}^b_\lambda) \ge 1/2$ and $t \in [6(r_{\mathrm{x}^b} \wedge q)^2/(N-1), 9(r_{\mathrm{x}} \wedge q)]$, 
	\begin{equation}
	P\left[ \left|S^{2}(\hat{r}^b_{\lambda}) - \mathrm{var}(\hat{r}^b_{\lambda})\right| > t \right] \le 6 \exp[ -n t^2/ \{162(r_{\mathrm{x}} \wedge q)^4 b\} ]. \label{eqn:concentration}
	\end{equation}
\end{theorem}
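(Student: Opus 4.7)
The plan is to split $S^{2}(\hat r^b_\lambda) - \mathrm{var}(\hat r^b_\lambda)$ into three pieces, absorb a deterministic remainder using the lower hypothesis on $t$, and control the remaining random pieces with concentration inequalities for bounded U-statistics. With $\bar U_k := N^{-1}\sum_{j=1}^N \hat r^b_\lambda(S_j)^k$ and $\mu_k := E\hat r^b_\lambda(S_1)^k$ for $k=1,2$, a short algebraic expansion gives
\begin{equation*}
S^{2} - \mathrm{var}(\hat r^b_\lambda) \;=\; \frac{N}{N-1}(\bar U_2 - \mu_2) \;-\; \frac{N}{N-1}(\bar U_1^2 - \mu_1^2) \;+\; \frac{\mu_2 - \mu_1^2}{N-1}.
\end{equation*}
The last piece is deterministic and bounded in absolute value by $(r_{\mathrm{x}^b}\wedge q)^2/(N-1) \le t/6$ by the lower hypothesis on $t$. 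Expanding $\bar U_1^2 - \mu_1^2 = (\bar U_1 - \mu_1)(\bar U_1 + \mu_1)$ with $|\bar U_1 + \mu_1| \le 2(r_{\mathrm{x}}\wedge q)$, the triangle inequality reduces the event $\{|S^{2} - \mathrm{var}| > t\}$, after allocating the remaining $5t/6$ across the two random pieces, to the union of tail events of the form $\{|\bar U_2 - \mu_2| > a_2\, t\}$ and $\{|\bar U_1 - \mu_1| > a_1\, t/(r_{\mathrm{x}}\wedge q)\}$ with explicit constants $a_1, a_2 > 0$.

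Each $\hat r^b_\lambda(S_j)$ is a symmetric function of the $b$ i.i.d.\ observations in $S_j$, taking values in $[0, r_{\mathrm{x}^b}\wedge q] \subseteq [0, r_{\mathrm{x}}\wedge q]$, so $\bar U_1$ and $\bar U_2$ are incomplete U-statistics with bounded kernels. I would control each $\bar U_k - \mu_k$ in two layers. Conditionally on the underlying data, the $N$ subsamples are i.i.d.\ uniform $b$-subsets, so the classical two-sided Hoeffding inequality gives a sub-Gaussian bound for $\bar U_k - E[\bar U_k \mid \mathrm{data}]$ at rate $N$. Unconditionally, $E[\bar U_k \mid \mathrm{data}] = \binom{n}{b}^{-1}\sum_{|S|=b}\hat r^b_\lambda(S)^k$ is a complete U-statistic of degree $b$, whose deviation from $\mu_k$ is controlled by Hoeffding's U-statistic inequality with effective sample size $\lfloor n/b \rfloor \ge n/(2b)$. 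A triangle inequality plus a union bound then yields, for each $k$, a tail of the form $C\exp(-c\, n s^{2}/\{b(r_{\mathrm{x}}\wedge q)^{2k}\})$. Setting $s = a_2 t$ and $s = a_1 t/(r_{\mathrm{x}}\wedge q)$ respectively and taking a final union bound produces the stated bound $6\exp(-nt^{2}/\{162(r_{\mathrm{x}}\wedge q)^{4} b\})$ once the constants are consolidated.

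The main obstacle is bookkeeping two sources of randomness so that the effective sample size in the final exponent is $n/b$ rather than $\min(N,n/b)$. The lower bound $t \ge 6(r_{\mathrm{x}^b}\wedge q)^{2}/(N-1)$ forces $N$ to be large relative to $n/b$ in the relevant regime, which causes the subsampling tail to be absorbed into the data-generating U-statistic tail. The upper bound $t \le 9(r_{\mathrm{x}}\wedge q)$ together with $E(\hat r^b_\lambda) \ge 1/2$ ensures that the linearization $\bar U_1^2 - \mu_1^2 \approx 2\mu_1(\bar U_1 - \mu_1)$ remains valid on the event being bounded, so that replacing $|\bar U_1 + \mu_1|$ by $2(r_{\mathrm{x}}\wedge q)$ does not cost an extra factor in the exponent. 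The specific numerical constants $6$ and $162$ then fall out of an optimization over how $5t/6$ is divided between the two random pieces and the two-sided Hoeffding factor of $2$ picked up in each application.
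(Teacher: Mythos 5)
Your argument is, at its core, the same as the paper's: the identical three-term decomposition of $S^{2}(\hat r^b_{\lambda})-\mathrm{var}(\hat r^b_{\lambda})$ into two empirical-moment deviations plus a deterministic $O(1/(N-1))$ remainder, absorption of that remainder via the lower bound on $t$, Hoeffding's inequality for bounded U-statistics of order $b$ applied to the first and second moments, the factorization $\bar U_1^2-\mu_1^2=(\bar U_1-\mu_1)(\bar U_1+\mu_1)$, and a final union bound. Two differences are worth noting. First, you bound $|\bar U_1+\mu_1|$ deterministically by $2(r_{\mathrm{x}}\wedge q)$, whereas the paper bounds it by $3(r_{\mathrm{x}^b}\wedge q)$ on a high-probability event; that event is where the paper actually uses the hypothesis $E(\hat r^b_\lambda)\ge 1/2$ (to get $\hat m_1^b\le 2m_1^b$ with high probability) and the upper bound $t\le 9(r_{\mathrm{x}}\wedge q)$ (to fold the resulting extra exponential term into the stated one, yielding the factor $6$). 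Your reading of these two hypotheses as protecting a ``linearization'' is not how they enter; your deterministic route simply would not need them, which is a mild simplification. Second, and more substantively, you correctly flag that $\bar U_1,\bar U_2$ are \emph{incomplete} U-statistics and propose a two-layer (conditional-on-data, then complete-U-statistic) bound; the paper ignores this distinction and applies the complete-U-statistic inequality directly to the average over the $N$ drawn subsamples, following \citet{liu2010stability}. Your instinct is the more careful one, but your claimed resolution does not hold: $t\ge 6(r_{\mathrm{x}^b}\wedge q)^2/(N-1)$ combined with $t\le 9(r_{\mathrm{x}}\wedge q)$ only forces $N-1\gtrsim (r_{\mathrm{x}^b}\wedge q)^2/(r_{\mathrm{x}}\wedge q)$ and says nothing about $N$ versus $n/b$, so the rate-$N$ conditional tail is not in general dominated by the rate-$n/b$ tail. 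Closing that step would require an explicit assumption such as $N\gtrsim n/b$ (or reverting to the paper's complete-U-statistic shortcut).
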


This result is similar to Theorem 1 in \citet{liu2010stability} although our focus is the difference between the sample variance and the true variance while \citet{liu2010stability} concerns the difference between the sample mean and the true mean. From (\ref{eqn:concentration}), it is seen that there is a trade-off between the difference $t$ and the probability on its right-hand side. For example, if $t$ is a fixed quantity, then the probability tends to zero as long as $(r_{\mathrm{x}} \wedge q)^4 b/n \to 0$. However, if one wishes to choose $t$ such that $t \to 0$ in order to achieve an asymptotically negligible difference, then the condition for the probability tending to zero becomes $(r_{\mathrm{x}} \wedge q)^4 b/(nt^2) \to 0$, depending on the convergence rate of $t$.

Combining Theorems \ref{Theorem-1} and \ref{Theorem-2}, we are able to validate the patterns (i) and (ii) as observed in Figure \ref{fig:toy} for the instability, which is presented in the following corollary.

\begin{corollary}\label{Corollary-1}
	Suppose Assumptions \ref{Assumption-1}--\ref{Assumption-3} hold and $\Lambda = \{\lambda_1,\ldots,\lambda_K\}$ is a grid of $K$ increasing positive values of $\lambda$. When $r_{\mathrm{x}^b} + q$ is large enough, with probability at least $1 - 6 K \exp[ -n \delta^2 / \{648 C^2 (r_{\mathrm{x}} \wedge q)^4 b \} ]$,
	\begin{align}
	\min \left\{ \hat{D}(\lambda) : \lambda \in \Lambda \cap [\lambda_l, \lambda_m] \right\} &\ge [(C-1)/C] \delta (1 - \delta), \label{hat.var.lower.bound} \\
	\max \left\{ \hat{D}(\lambda) : \lambda \in \Lambda \cap [\lambda_h, \{(1 + 3\theta/2) \sigma (\sqrt{r_{\mathrm{x}^b}} + \sqrt{q})\}^{\gamma + 1} ] \right\} &\le (2/C) \delta (1 - \delta), \label{hat.var.upper.bound}
	\end{align}
	for any fixed $C > 3$ and any $\delta \in [8 C (r_{\mathrm{x}^b} \wedge q)^2 \exp\{-\theta^2 (\sqrt{r_{\mathrm{x}^b}} + \sqrt{q})^2 /8\} \vee 12C (r_{\mathrm{x}^b} \wedge q)^2/(N-1), 1/2)$. 
\end{corollary}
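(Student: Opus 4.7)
The plan is to combine Theorems \ref{Theorem-1} and \ref{Theorem-2} via a union bound across the grid $\Lambda$. Theorem \ref{Theorem-1} controls the \emph{true} variance $\mathrm{var}(\hat{r}^b_\lambda)$ on the two $\lambda$-regions of interest, while Theorem \ref{Theorem-2} shows that the \emph{sample} variance $\hat{D}(\lambda) = S^{2}(\hat{r}^b_{\lambda})$ deviates from the true variance by at most a chosen $t$ with high probability. I would take the tolerance $t = \delta(1-\delta)/C$, which is the natural ``radius'' implied by the target constants $(C-1)/C$ and $2/C$ in (\ref{hat.var.lower.bound}) and (\ref{hat.var.upper.bound}).

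First I would verify that this $t$ lies in the admissible window $[6(r_{\mathrm{x}^b}\wedge q)^{2}/(N-1),\ 9(r_{\mathrm{x}}\wedge q)]$ of Theorem \ref{Theorem-2}. Since $\delta < 1/2$ gives $1-\delta \ge 1/2$, we have $t \ge \delta/(2C)$, and the hypothesis $\delta \ge 12C(r_{\mathrm{x}^b}\wedge q)^{2}/(N-1)$ then forces $t \ge 6(r_{\mathrm{x}^b}\wedge q)^{2}/(N-1)$; the upper end $t \le 1/(4C) \le 9(r_{\mathrm{x}}\wedge q)$ follows from $C > 3$. Using $(1-\delta)^{2} \ge 1/4$ inside the exponent of (\ref{eqn:concentration}) converts the per-$\lambda$ failure probability into $6\exp[-n\delta^{2}/\{648 C^{2}(r_{\mathrm{x}}\wedge q)^{4} b\}]$, and a union bound over the $K$ grid points produces the factor $6K$ in the statement of the corollary.

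Next, on the complementary event that $|\hat{D}(\lambda) - \mathrm{var}(\hat{r}^b_\lambda)| \le t$ at every $\lambda \in \Lambda$, the two displayed inequalities fall out directly from Theorem \ref{Theorem-1}. For (\ref{hat.var.lower.bound}), part (c) gives $\mathrm{var}(\hat{r}^b_\lambda) \ge \delta(1-\delta)$ on $[\lambda_l, \lambda_m]$, so $\hat{D}(\lambda) \ge \delta(1-\delta) - t = [(C-1)/C]\,\delta(1-\delta)$. For (\ref{hat.var.upper.bound}), set $A := 4(r_{\mathrm{x}^b}\wedge q)^{2}\exp\{-\theta^{2}(\sqrt{r_{\mathrm{x}^b}}+\sqrt{q})^{2}/8\}$; part (a) gives $\mathrm{var}(\hat{r}^b_\lambda) \le A$ on the upper region, while the hypothesis $\delta \ge 8C(r_{\mathrm{x}^b}\wedge q)^{2}\exp\{-\theta^{2}(\sqrt{r_{\mathrm{x}^b}}+\sqrt{q})^{2}/8\} = 2CA$ combined with $1-\delta \ge 1/2$ yields $A \le \delta/(2C) \le \delta(1-\delta)/C$, so $\hat{D}(\lambda) \le A + t \le 2\delta(1-\delta)/C$.

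The hard part will be purely the constant bookkeeping: the two lower bounds imposed on $\delta$ in the hypothesis are dictated precisely by these two places, the $12C$ term so that $t$ clears $6(r_{\mathrm{x}^b}\wedge q)^{2}/(N-1)$ for Theorem \ref{Theorem-2}, and the $8C$ term so that $A$ fits inside the target radius $\delta(1-\delta)/C$. A minor side check is the precondition $E(\hat{r}^b_\lambda) \ge 1/2$ of Theorem \ref{Theorem-2}: on the upper region Theorem \ref{Theorem-1}(a) forces $E(\hat{r}^b_\lambda) \to r^\ast$, and on the lower region $E(\hat{r}^b_\lambda) \ge (r^\ast+1)\delta$ via part (c), so provided $r^\ast \ge 1$ and $r_{\mathrm{x}^b}+q$ is large enough, this technical requirement is met automatically.
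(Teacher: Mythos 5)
Your proposal is correct and follows essentially the same route as the paper's own proof: apply Theorem \ref{Theorem-2} with tolerance $t=\delta(1-\delta)/C$, use $\delta(1-\delta)\ge \delta/2$ together with the two lower bounds on $\delta$ to verify both the admissible window for $t$ and the containment of the true-variance upper bound $4(r_{\mathrm{x}^b}\wedge q)^2\exp\{-\theta^2(\sqrt{r_{\mathrm{x}^b}}+\sqrt{q})^2/8\}$ inside $\delta(1-\delta)/C$, bound $(1-\delta)^2\ge 1/4$ in the exponent, and take a union bound over the $K$ grid points. Your side check of the precondition $E(\hat r^b_\lambda)\ge 1/2$ is in fact slightly more explicit than the paper's, which simply asserts it from Theorem \ref{Theorem-1} for large $r_{\mathrm{x}^b}+q$.
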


In Corollary \ref{Corollary-1}, to ensure the interval for the possible values of $\delta$ is not empty, as $8 C (r_{\mathrm{x}^b} \wedge q)^2 \exp\{-\theta^2 (\sqrt{r_{\mathrm{x}^b}} + \sqrt{q})^2 /8\}$ is small with a large enough $r_{\mathrm{x}^b} + q$, one only needs to assume that $(r_{\mathrm{x}^b} \wedge q)^2/N$ is small. Moreover, for a large enough $C$, the lower bound in (\ref{hat.var.lower.bound}) is close to $\delta (1 - \delta)$ and the upper bound in (\ref{hat.var.upper.bound}) is close to 0. This verifies the patterns (i) and (ii) for the instability as observed in Figure \ref{fig:toy}.

Based on Corollary \ref{Corollary-1}, we can choose an appropriate threshold $\eta$ in StARS-RRR so that the optimal $\hat{\lambda}$ lies either in the interval $[\lambda_m, \lambda_h]$ or $[\lambda_h, \{(1 + 3\theta/2) \sigma (\sqrt{r_{\mathrm{x}^b}} + \sqrt{q})\}^{\gamma + 1}]$ {\color{blue}as long as the candidate tuning parameters $\Lambda = \{\lambda_1,\ldots,\lambda_K\}$ in StARS-RRR are all larger than $\lambda_l$}. Combining this result with parts (a) and (b) of Theorem \ref{Theorem-1} establishes the rank estimation consistency of StARS-RRR as presented in the following theorem.




\begin{theorem}\label{Theorem-3}
	Suppose Assumptions \ref{Assumption-1}--\ref{Assumption-3} hold and $\Lambda = \{\lambda_1,\ldots,\lambda_K\}$ is a grid of $K$ increasing positive values of $\lambda$ {\color{blue}such that $\lambda_1 \ge \lambda_l$ with $\lambda_l$ defined in Theorem \ref{Theorem-1}.} Let $\hat\lambda$ be the optimal tuning parameter selected by StARS-RRR described in Algorithm \ref{alg1} with a threshold $\eta = \delta(1-\delta)/2$ where $\delta \in [32 (r_{\mathrm{x}^b} \wedge q)^2 \exp\{-\theta^2 (\sqrt{r_{\mathrm{x}^b}} + \sqrt{q})^2 /8\} \vee 48 (r_{\mathrm{x}^b} \wedge q)^2/(N-1), 1/2)$, and let $\hat{r}^b_{\hat\lambda}$ be the estimated rank at $\hat\lambda$ from the subsample $X^b$ and $Y^b$. Then,
	\begin{equation}
	P(\hat{r}^b_{\hat\lambda} = r^\ast) \ge 1 - K \delta - 2 K \exp\{-\theta^2 (\sqrt{r_{\mathrm{x}^b}} + \sqrt{q})^2 /8\} -  6 K \exp[ -n \delta^2 / \{10368 (r_{\mathrm{x}} \wedge q)^4 b \}]. \label{consistency.probability}
	\end{equation}
	Furthermore, assume that $K$ is a fixed integer, that $r_{\mathrm{x}^b} + q \to \infty$, and that there exist $\alpha > 0$ and $\beta > 0$ such that $(r_{\mathrm{x}^b} \wedge q)^2 \exp\{-\theta^2 (\sqrt{r_{\mathrm{x}^b}} + \sqrt{q})^2/8\} = o(n^{-\alpha})$ and $(r_{\mathrm{x}} \wedge q)^4 b/n = o(n^{-\beta})$, as $n \to \infty$. Then, we can choose $\delta = n^{-(\alpha \wedge \beta)/2}$ in (\ref{consistency.probability}), which leads to
	\[
	P(\hat{r}^b_{\hat\lambda} = r^\ast) \to 1, \quad \text{as } n \to \infty.
	\]
\end{theorem}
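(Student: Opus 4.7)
The plan is to combine Corollary~\ref{Corollary-1} with part~(b) of Theorem~\ref{Theorem-1}: the corollary, applied with an appropriate choice of $C$, shows that on a high-probability event the instability profile is strictly separated around the threshold $\eta=\delta(1-\delta)/2$ and therefore forces the StARS-RRR selector $\hat\lambda$ into the region $\lambda>\lambda_m$, where Theorem~\ref{Theorem-1}(b) guarantees consistent rank recovery on the size-$b$ subsample. Two union bounds --- one across the grid points $\lambda_k\geq\lambda_m$ for rank recovery, and one combining the localization and recovery events --- will yield the non-asymptotic inequality (\ref{consistency.probability}). The asymptotic statement then follows by substituting $\delta=n^{-(\alpha\wedge\beta)/2}$ and verifying that each of the three resulting terms vanishes.

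First I would apply Corollary~\ref{Corollary-1} with the explicit choice $C=4$. Its lower bound on instability over $\Lambda\cap[\lambda_l,\lambda_m]$ becomes $(3/4)\delta(1-\delta)$, which strictly exceeds $\eta=\delta(1-\delta)/2$, while its upper bound over $\Lambda\cap[\lambda_h,\cdots]$ is exactly $\eta$. The corresponding failure probability $6K\exp[-n\delta^2/\{648\cdot 16\,(r_{\mathrm{x}}\wedge q)^4 b\}]=6K\exp[-n\delta^2/\{10368(r_{\mathrm{x}}\wedge q)^4 b\}]$ matches the third term of (\ref{consistency.probability}), and the admissible $\delta$-interval reduces to the one stated in the theorem. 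On this high-probability event, the hypothesis $\lambda_1\geq\lambda_l$ implies $\hat{D}(\lambda_k)>\eta$ for every $\lambda_k\in\Lambda$ with $\lambda_k\leq\lambda_m$, so the running minimum $\bar{D}$ also exceeds $\eta$ on that portion of the grid, and the StARS-RRR stopping rule therefore selects $\hat\lambda>\lambda_m$.

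Next, Theorem~\ref{Theorem-1}(b) gives, for each fixed $\lambda_k\in\Lambda$ with $\lambda_k\geq\lambda_m$, $P(\hat{r}^b_{\lambda_k}\neq r^\ast)\leq\delta+2\exp\{-\theta^2(\sqrt{r_{\mathrm{x}^b}}+\sqrt{q})^2/8\}$. A union bound over the at most $K$ such grid points produces $K\delta+2K\exp\{\cdots\}$. Using the event inclusion
\[
\{\hat{r}^b_{\hat\lambda}\neq r^\ast\}\subseteq\{\hat\lambda\leq\lambda_m\}\cup\bigl\{\exists\,k:\lambda_k\geq\lambda_m,\ \hat{r}^b_{\lambda_k}\neq r^\ast\bigr\},
\]
a second union bound assembles the localization bound and the recovery bound into precisely the three terms of (\ref{consistency.probability}). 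Writing the decomposition via an inclusion of events rather than via conditioning is the cleanest route, because $\hat\lambda$ is a function of all $N$ subsamples while $\hat{r}^b_{\hat\lambda}$ depends only on a single size-$b$ subsample, and a conditional-probability argument would have to contend with this dependence explicitly.

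For the asymptotic conclusion, substitute $\delta=n^{-(\alpha\wedge\beta)/2}$. The term $K\delta\to 0$ trivially since $K$ is fixed. The assumption $(r_{\mathrm{x}^b}\wedge q)^2\exp\{-\theta^2(\sqrt{r_{\mathrm{x}^b}}+\sqrt{q})^2/8\}=o(n^{-\alpha})$ drives the second term to zero and also verifies that the lower endpoint of the admissible $\delta$-interval is met for large $n$. For the third term, $(r_{\mathrm{x}}\wedge q)^4 b/n=o(n^{-\beta})$ combined with $\delta^2=n^{-(\alpha\wedge\beta)}$ gives $n\delta^2/\{10368(r_{\mathrm{x}}\wedge q)^4 b\}=\omega(n^{\beta-(\alpha\wedge\beta)})$, which diverges in both regimes $\alpha\leq\beta$ (exponent $\geq n^{\beta-\alpha}\geq 1$) and $\alpha>\beta$ (the strict $o(\cdot)$ supplies the extra diverging factor), so the last term vanishes as well. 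The main obstacle is less computational than bookkeeping: the constant $C=4$ must be calibrated so that the two bounds of Corollary~\ref{Corollary-1} straddle $\eta$ with strict separation while simultaneously reproducing the exact $\delta$-admissibility interval and constant $10368$ stated in Theorem~\ref{Theorem-3}; once that alignment is verified, everything else is essentially a pair of union bounds.
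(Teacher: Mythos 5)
Your proposal is correct and follows essentially the same route as the paper: invoke Corollary~\ref{Corollary-1} with $C=4$ to separate the instability profile around $\eta=\delta(1-\delta)/2$ and localize $\hat\lambda$ above $\lambda_m$, then apply Theorem~\ref{Theorem-1} with a union bound over the grid, and finally substitute $\delta=n^{-(\alpha\wedge\beta)/2}$. The only cosmetic difference is that the paper localizes $\hat\lambda\in[\lambda_m,\lambda_u]$ and splits the grid into $[\lambda_m,\lambda_h]$ (using part (b)) and $[\lambda_h,\lambda_u]$ (using the sharper part (a)) before relaxing to $K\delta+2K\exp\{-\theta^2(\sqrt{r_{\mathrm{x}^b}}+\sqrt{q})^2/8\}$, whereas you apply part (b) uniformly to all grid points above $\lambda_m$; both yield the identical bound in (\ref{consistency.probability}).
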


Theorem \ref{Theorem-3} consists of two parts. First, it provides a finite sample lower bound for the probability with which the rank is correctly estimated when the tuning parameter is selected by StARS-RRR. It is interesting that there is an explicit relationship between this lower bound and the threshold $\eta$ used in StARS-RRR because $\eta = \delta(1-\delta)/2$ and the lower bound in (\ref{consistency.probability}) is also a known function of $\delta$. Therefore, this result gives $\eta$ an explicit interpretation by connecting it to the theoretical property of the estimated rank and makes the choice of the threshold meaningful in the StARS-RRR algorithm. Second, under further technical conditions, the estimated rank is asymptotically consistent to the true rank as the above-mentioned lower bound tends to one. While most of these conditions are common and also moderate, we note that the condition $(r_{\mathrm{x}} \wedge q)^4 b/n = o(n^{-\beta})$ imposes an upper bound on $r_{\mathrm{x}} \wedge q$ depending on $b$ and $n$. This condition arises from Theorem \ref{Theorem-2} and thus is similar to the condition that ensures the upper bound of probability in (\ref{eqn:concentration}) converges to zero.

{\color{blue} From the technical proofs of Theorems \ref{Theorem-1}--\ref{Theorem-3} (see the supplementary material), it is seen that our main theoretical result, the rank estimation consistency, is actually not limited to the adaptive nuclear norm penalization method. In fact, any rank estimation method that results in an estimated rank as in (\ref{equ:est-r}) will enjoy the results in Theorem \ref{Theorem-3}. For example, the $L_0$ penalized estimator in \citet{bunea2011optimal} also results in a similar form of the estimated rank:
\begin{equation*}
\hat{r}_{\lambda}= \max\{r: d_r(PY) > \lambda\},
\end{equation*}
with $\lambda$ being the tuning parameter for the $L_0$ penalty. Therefore, the rank estimation consistency would also hold for the $L_0$ penalized estimator in \citet{bunea2011optimal} as long as we replace $\gamma$ in (\ref{equ:est-r}) by $0$.}

Compared to the partial sparsistency property as established for StARS in \citet{liu2010stability}, our result is apparently stronger. The partial sparsistency result shows that with probability tending to one the true edges of a graph belong to the estimated edge set using the optimal tuning parameter selected by StARS on a subsample of size $b$, which is only a ``one-direction'' result. By contrast, our result is a ``two-direction'' result that establishes the rank estimation consistency for StARS-RRR.

In practice, after the optimal tuning parameter is selected by StARS-RRR, one often performs the reduced-rank regression on the full data with this optimal tuning parameter. Thus, we recommend choosing $b$ sufficiently large so that the behavior of the subsample $X^b$ and $Y^b$ is similar to that of the full data. On the other hand, $b$ cannot be too large as implied by the technical condition $(r_{\mathrm{x}} \wedge q)^4 b/n = o(n^{-\beta})$. Therefore, an appropriate size $b$ needs to be selected under a particular setting of $X$, $q$, and $n$. This philosophy is similar to how $b$ is chosen in StARS although our technical condition about $b$ is slightly more complicated than that for StARS. {\color{blue}In our numerical studies in Section \ref{sec:num}, we set $\eta = 0.001$, $b = 0.7n$, and $N = 100$ in Algorithm \ref{alg1} for StARS-RRR.}

\section{Numerical Experiments}\label{sec:num}

\subsection{Simulation}\label{sec:sim}

In this subsection, we compare the finite sample performance of the rank determination via StARS-RRR and other approaches including AIC, BIC, GIC, BICP, GCV and cross validation (CV) on simulated data. 

We adopt the same simulation settings from \citet{bunea2011optimal}. Specifically, the coefficient matrix $ C $ is generated by $ C=sC_{1}C_{2}^T $, where $ s>0 $, $ C_{1}\in\mathbb{R}^{p\times r^{\ast}} $, $ C_{2}\in\mathbb{R}^{q\times r^{\ast}} $. All entries in $ C_{1} $ and $ C_{2} $ are drawn randomly from { ${N}(0,1)$}. The design matrix $ X $ is generated by $ X=X_{0}\Gamma^{1/2}$, where $ X_{0}=X_{1}X_{2}^T$, $ X_{1}\in\mathbb{R}^{n\times r_{\mathrm{x}}} $, $ X_{2}\in\mathbb{R}^{p\times r_{\mathrm{x}}} $, and $ \Gamma=(\Gamma_{ij})_{p\times p} $ with $ \Gamma_{ij}=\rho^{\lvert i-j\rvert} $, $i,j=1,\ldots,p$. The response matrix $ Y $ is then generated by $ Y=XC+E $, where the elements of $ E $ are independent random variables from { ${N}(0,1)$}. Thus, the simulation model is characterized by the sample size $ n $, the number of predictors $ p $, the number of responses $ q $, the rank of the design matrix $r_{\mathrm{x}} $, the true model rank $ r^{\ast} $, the correlation coefficient between the adjacent predictors $\rho$, and the signal $s$.

We will explore two different model settings, where Model I is a low-dimensional case  with $(n, p, q, r_{\mathrm{x}}, r^\ast) = (500, 25, 25, 15, 10)$ and Model II is a high-dimensional case with $(n, p, q, r_{\mathrm{x}}, r^\ast) = (80, 100, 100, 30, 8)$. In Model I, $p$ and $q$ are relatively small compared with $n$; while in Model II, $p$ and $q$ are relatively large. Although these finite-sample settings may not align perfectly with the technical conditions in our large-sample theory in Section \ref{sec:the}, they still deserve numerical investigations to see how StARS-RRR performs in practice. In addition, for each model, we set $\rho$ to be 0.1, 0.5, or 0.9, which stands for weak, moderate, or strong dependence between the predictors. We consider six different values of $s$ such that the signal-to-noise ratio (SNR) ranges from 1 to 3 roundly. Since the $ r^{\ast}$th largest singular value of $ XC $, i.e., $ d_{r^{\ast}}(XC) $, measures the signal strength and the largest singular value of the projected noise matrix $ \mathrm{PE} =X(X^TX)^{-}X^TE $, i.e., $ d_{ 1 }(\mathrm{PE}) $, measures the noise level, the SNR is defined as $ d_{r^{\ast}}(XC)/d_{ 1 }(\mathrm{PE}) $ \citep{chen2013reduced}.

We apply the reduced-rank regression via the adaptive nuclear norm penalization to fit the full data with the optimal tuning parameter selected by AIC, BIC, GIC, BICP, GCV, CV (5-fold), and StARS-RRR. In StARS-RRR, we set the threshold $\eta$ as 0.0001. For each method, a total of 500 simulation replications are conducted. 
To compare the performance of the aforementioned methods, we consider several performance measures. The first measure is the rank recovery ratio, defined as the proportion of $\{\hat{r}=r^{\ast}\} $ over all replications, where $\hat{r}$ is the estimated rank. The method with a higher rank recovery ratio is more effective in rank determination. The second and third measures are the rank overestimate ratio and the rank underestimate ratio, which are defined by the proportion of $\{\hat{r}>r^{\ast}\} $ and $\{\hat{r}<r^{\ast}\} $ over all replications, respectively. The fourth measure is the bias of the estimated rank, defined as the mean difference between the estimated rank and the true rank. A better performing method should have a lower bias in terms of magnitude.
%

Table \ref{tab:ratios} summarizes the rank recovery, underestimate, and overestimate ratios. For model I, it is clear that StARS-RRR outperforms the other methods in terms of every performance measure when SNR is moderate to high. In particular, the rank recovery ratio of StARS-RRR is at least $10\%$ higher than those of information criterion based methods. When the SNR is low, i.e., $\text{SNR}<1.5$, GCV has the best performance, followed by AIC, CV, and StARS-RRR. In this case, the StARS-RRR estimators from the subsamples might be dominated by the noise, and this leads to underestimation of the rank. In contrast, GCV, CV, and AIC tend to overestimate the rank and result in a more complicate model, which would explain why these methods have good performance when SNR is very low.

For Model II, different from Model I where the sample size is sufficient compared to the dimension, BICP and GIC can rarely recover the true rank. This is probably because that the penalty in BICP and GIC contains the term $\log(pq)$ (Table \ref{tab:loss}), which becomes excessive when the sample size is small and the dimension is high. As a result, the estimated ranks from these two methods are much smaller than the true rank. This can also be observed from Table~\ref{tab:bias}, where the mean biases of the estimated ranks from BICP and GIC are negative. {\color{blue}It is clear that StARS-RRR has a much better performance than BICP and GIC in recovering the true rank among a wide range of SNR.} Regardless of whether the correlation between the predictors is low, medium, or high, StARS-RRR can always achieve a rank recovery ratio of at least $97\%$ when the SNR is greater than $1.4$, a commonly occurred case in practice. StARS-RRR also slightly outperforms CV although the performance of CV in Model II is better than in Model I. In summary, when the sample size is small and the dimension is high, StARS-RRR can obtain a reliable result in determining the rank in a reduced-rank regression model.

\begin{table}
	\caption{\label{tab:ratios}Rank recovery (left), underestimate (middle), and overestimate (right) ratios (in percentage) in the simulation study.}
	\centering
	\resizebox{\textwidth}{!}{
	\begin{tabular}{ccccccccc}
		\hline
		s& SNR & AIC & BIC & GIC & BICP & GCV &  CV & StARS-RRR \\
		\hline
		\hline
		\multicolumn{9}{c}{Model \uppercase\expandafter{\romannumeral1}, $\rho = 0.1$} \\
		30 & 1.07 & (81,4,15) & (17,83,0) & (2,98,0) & (4,96,0) & (85,4,12) & (80,5,15) & (63,37,0) \\ 
  45 & 1.6 & (86,1,14) & (62,38,0) & (36,64,0) & (42,58,0) & (87,1,12) & (86,1,13) & (92,8,0) \\ 
  52 & 1.85 & (85,0,14) & (76,24,0) & (53,47,0) & (59,41,0) & (87,0,12) & (87,0,13) & (95,4,0) \\ 
  60 & 2.14 & (86,0,14) & (87,13,0) & (68,32,0) & (74,26,0) & (87,0,13) & (87,0,13) & (98,2,0) \\ 
  70 & 2.49 & (86,0,14) & (94,6,0) & (83,17,0) & (86,14,0) & (87,0,13) & (87,0,13) & (98,1,0) \\ 
  85 & 3.03 & (85,0,15) & (98,2,0) & (93,7,0) & (95,5,0) & (87,0,13) & (87,0,13) & (99,0,0) \\ 
		
		\hline
		\multicolumn{9}{c}{Model \uppercase\expandafter{\romannumeral1}, $\rho = 0.5$} \\
		35 & 1.1 & (83,4,13) & (23,77,0) & (4,96,0) & (6,94,0) & (84,5,11) & (82,4,14) & (63,37,0) \\ 
  40 & 1.26 & (85,2,13) & (36,64,0) & (12,88,0) & (18,82,0) & (86,3,11) & (83,3,14) & (73,26,1) \\ 
  50 & 1.57 & (86,1,13) & (60,40,0) & (35,65,0) & (42,58,0) & (87,1,12) & (85,1,13) & (89,10,1) \\ 
  70 & 2.2 & (86,0,14) & (86,14,0) & (68,32,0) & (73,27,0) & (88,0,12) & (87,0,13) & (97,3,1) \\ 
  80 & 2.52 & (86,0,14) & (93,7,0) & (80,20,0) & (84,16,0) & (88,0,12) & (87,0,13) & (99,1,1) \\ 
  95 & 2.99 & (86,0,14) & (97,3,0) & (90,10,0) & (92,8,0) & (88,0,12) & (86,0,14) & (99,1,1) \\ 
		\hline
		\multicolumn{9}{c}{Model \uppercase\expandafter{\romannumeral1}, $\rho = 0.9$} \\
		70 & 1.08 & (78,5,17) & (20,80,0) & (4,96,0) & (8,92,0) & (80,5,14) & (78,6,16) & (63,37,0) \\ 
  80 & 1.24 & (82,3,16) & (33,67,0) & (11,89,0) & (16,84,0) & (84,3,13) & (82,3,15) & (73,27,0) \\ 
  100 & 1.55 & (83,1,16) & (56,44,0) & (32,68,0) & (39,61,0) & (86,1,14) & (84,1,15) & (86,14,0) \\ 
  135 & 2.09 & (85,0,15) & (83,17,0) & (64,36,0) & (70,30,0) & (88,0,12) & (86,0,14) & (97,3,0) \\ 
  175 & 2.71 & (85,0,15) & (94,6,0) & (84,16,0) & (88,12,0) & (87,0,13) & (87,0,13) & (99,1,0) \\ 
  200 & 3.1 & (85,0,15) & (97,3,0) & (91,9,0) & (93,7,0) & (87,0,13) & (86,0,14) & (99,0,0) \\
	
		\hline\hline
		\multicolumn{9}{c}{Model \uppercase\expandafter{\romannumeral2}, $\rho = 0.1$} \\
		8 & 1.16 & (43,0,57) & (0,100,0) & (0,100,0) & (0,100,0) & (86,0,14) & (90,1,9) & (77,23,0) \\ 
  10 & 1.45 & (48,0,52) & (12,88,0) & (0,100,0) & (0,100,0) & (91,0,9) & (97,0,3) & (98,2,0) \\ 
  12 & 1.73 & (50,0,50) & (54,46,0) & (0,100,0) & (0,100,0) & (93,0,7) & (99,0,1) & (100,0,0) \\ 
  14 & 2.02 & (50,0,50) & (83,17,0) & (0,100,0) & (0,100,0) & (93,0,7) & (99,0,1) & (100,0,0) \\ 
  18 & 2.6 & (47,0,53) & (99,1,0) & (5,95,0) & (24,76,0) & (93,0,7) & (99,0,1) & (100,0,0) \\ 
  22 & 3.18 & (45,0,55) & (100,0,0) & (65,35,0) & (80,20,0) & (92,0,8) & (99,0,1) & (100,0,0) \\
  
		\hline
		\multicolumn{9}{c}{Model \uppercase\expandafter{\romannumeral2}, $\rho = 0.5$} \\
		8 & 1.12 & (42,0,58) & (0,100,0) & (0,100,0) & (0,100,0) & (86,0,14) & (89,0,11) & (69,30,0) \\ 
  10 & 1.4 & (48,0,52) & (5,95,0) & (0,100,0) & (0,100,0) & (89,0,11) & (95,0,5) & (95,4,0) \\ 
  12 & 1.68 & (51,0,49) & (45,55,0) & (0,100,0) & (0,100,0) & (92,0,8) & (98,0,2) & (100,0,0) \\ 
  15 & 2.09 & (52,0,48) & (88,12,0) & (0,100,0) & (0,100,0) & (94,0,6) & (99,0,1) & (100,0,0) \\ 
  18 & 2.51 & (52,0,48) & (99,1,0) & (2,98,0) & (17,83,0) & (93,0,7) & (99,0,1) & (100,0,0) \\ 
  22 & 3.07 & (48,0,52) & (100,0,0) & (55,45,0) & (75,25,0) & (91,0,9) & (100,0,0) & (100,0,0) \\ 
		\hline
		\multicolumn{9}{c}{Model \uppercase\expandafter{\romannumeral2}, $\rho = 0.9$} \\
  11 & 1.05 & (41,0,59) & (0,100,0) & (0,100,0) & (0,100,0) & (81,0,19) & (83,1,16) & (58,42,0) \\ 
  13 & 1.24 & (44,0,56) & (3,97,0) & (0,100,0) & (0,100,0) & (85,0,15) & (90,0,10) & (86,14,0) \\ 
  16 & 1.52 & (47,0,53) & (26,74,0) & (0,100,0) & (0,100,0) & (89,0,11) & (96,0,4) & (99,1,0) \\ 
  21 & 2 & (49,0,51) & (82,18,0) & (0,100,0) & (1,99,0) & (91,0,9) & (99,0,1) & (100,0,0) \\ 
  26 & 2.48 & (49,0,51) & (98,2,0) & (15,85,0) & (25,75,0) & (90,0,10) & (99,0,1) & (100,0,0) \\ 
  32 & 3.05 & (47,0,53) & (100,0,0) & (61,39,0) & (74,26,0) & (90,0,10) & (99,0,1) & (100,0,0) \\ 
		
		\hline
		\hline
	\end{tabular}}
\end{table}

\begin{table}
	\caption{\label{tab:bias}Mean and standard error (in parenthesis) of bias of the estimated rank in the simulation study.}
	\centering
	\resizebox{\textwidth}{!}{
		\begin{tabular}{cccccccc}
			\hline\hline
			 SNR & AIC & BIC & GIC & BICP & GCV & CV  & StARS-RRR\\
			\hline
			\multicolumn{8}{c}{Model \uppercase\expandafter{\romannumeral1}, $\rho = 0.1$} \\
		1.07 & 0.12 (0.45) & -1.11 (0.68) & -1.73 (0.71) & -1.55 (0.7) & 0.09 (0.41) & 0.11 (0.46) & -0.4 (0.65) \\ 
  1.6 & 0.14 (0.39) & -0.4 (0.53) & -0.76 (0.65) & -0.65 (0.6) & 0.12 (0.37) & 0.13 (0.38) & -0.06 (0.37) \\ 
  1.85 & 0.15 (0.39) & -0.24 (0.43) & -0.51 (0.57) & -0.43 (0.54) & 0.13 (0.38) & 0.13 (0.38) & -0.03 (0.32) \\ 
  2.14 & 0.15 (0.39) & -0.13 (0.34) & -0.32 (0.47) & -0.26 (0.45) & 0.13 (0.36) & 0.14 (0.37) & 0 (0.28) \\ 
  2.49 & 0.15 (0.39) & -0.06 (0.23) & -0.17 (0.38) & -0.14 (0.35) & 0.14 (0.37) & 0.13 (0.36) & 0 (0.28) \\ 
  3.03 & 0.16 (0.39) & -0.02 (0.13) & -0.07 (0.26) & -0.05 (0.21) & 0.13 (0.36) & 0.14 (0.37) & 0 (0.19) \\ 
			\hline
			\multicolumn{8}{c}{Model \uppercase\expandafter{\romannumeral1}, $\rho = 0.5$} \\
			1.1 & 0.09 (0.43) & -0.98 (0.67) & -1.57 (0.69) & -1.42 (0.68) & 0.06 (0.42) & 0.11 (0.44) & -0.39 (0.63) \\ 
  1.26 & 0.12 (0.4) & -0.74 (0.63) & -1.22 (0.69) & -1.08 (0.68) & 0.09 (0.4) & 0.12 (0.43) & -0.25 (0.58) \\ 
  1.57 & 0.13 (0.39) & -0.42 (0.53) & -0.75 (0.62) & -0.66 (0.62) & 0.11 (0.37) & 0.13 (0.42) & -0.07 (0.48) \\ 
  2.2 & 0.15 (0.41) & -0.14 (0.35) & -0.33 (0.48) & -0.27 (0.45) & 0.12 (0.36) & 0.14 (0.4) & 0 (0.35) \\ 
  2.52 & 0.15 (0.38) & -0.07 (0.26) & -0.2 (0.4) & -0.16 (0.36) & 0.13 (0.37) & 0.14 (0.39) & 0.02 (0.32) \\ 
  2.99 & 0.15 (0.38) & -0.03 (0.18) & -0.1 (0.29) & -0.08 (0.27) & 0.13 (0.37) & 0.15 (0.4) & 0.02 (0.32) \\ 
			\hline
			\multicolumn{8}{c}{Model \uppercase\expandafter{\romannumeral1}, $\rho = 0.9$} \\
			1.08 & 0.13 (0.49) & -1 (0.64) & -1.55 (0.7) & -1.39 (0.69) & 0.1 (0.47) & 0.11 (0.52) & -0.42 (0.58) \\ 
  1.24 & 0.15 (0.45) & -0.77 (0.63) & -1.23 (0.67) & -1.09 (0.65) & 0.12 (0.43) & 0.14 (0.48) & -0.29 (0.49) \\ 
  1.55 & 0.17 (0.44) & -0.46 (0.54) & -0.79 (0.62) & -0.69 (0.61) & 0.14 (0.41) & 0.16 (0.46) & -0.14 (0.36) \\ 
  2.09 & 0.17 (0.42) & -0.17 (0.37) & -0.37 (0.51) & -0.31 (0.48) & 0.13 (0.38) & 0.17 (0.46) & -0.02 (0.24) \\ 
  2.71 & 0.16 (0.41) & -0.06 (0.23) & -0.16 (0.37) & -0.12 (0.32) & 0.14 (0.39) & 0.15 (0.43) & 0 (0.2) \\ 
  3.1 & 0.17 (0.42) & -0.03 (0.16) & -0.09 (0.29) & -0.07 (0.25) & 0.14 (0.39) & 0.16 (0.45) & 0 (0.19) \\ 
  
			\hline\hline
			\multicolumn{8}{c}{Model \uppercase\expandafter{\romannumeral2}, $\rho = 0.1$} \\
			1.16 & 1.16 (1.33) & -3.05 (0.85) & -7.63 (0.52) & -7.39 (0.62) & 0.16 (0.41) & 0.09 (0.33) & -0.79 (1.86) \\ 
  1.45 & 1.11 (1.37) & -1.45 (0.83) & -7.1 (0.73) & -6.65 (0.82) & 0.11 (0.36) & 0.03 (0.21) & -0.03 (0.2) \\ 
  1.73 & 1.16 (1.47) & -0.52 (0.61) & -6.52 (0.87) & -5.76 (1.03) & 0.09 (0.34) & 0.02 (0.14) & 0 (0.06) \\ 
  2.02 & 1.21 (1.55) & -0.17 (0.4) & -5.78 (1.15) & -4.61 (1.4) & 0.08 (0.33) & 0.01 (0.12) & 0 (0) \\ 
  2.6 & 1.37 (1.66) & -0.01 (0.09) & -3.24 (1.94) & -1.45 (1.27) & 0.09 (0.35) & 0.01 (0.11) & 0 (0) \\ 
  3.18 & 1.42 (1.67) & 0 (0) & -0.46 (0.75) & -0.22 (0.46) & 0.11 (0.41) & 0.01 (0.15) & 0 (0) \\ 
			\hline
			\multicolumn{8}{c}{Model \uppercase\expandafter{\romannumeral2}, $\rho = 0.5$} \\
			
			1.12 & 1.19 (1.36) & -3.19 (0.84) & -7.54 (0.55) & -7.31 (0.64) & 0.15 (0.41) & 0.11 (0.34) & -0.98 (2.27) \\ 
  1.4 & 1.1 (1.37) & -1.66 (0.83) & -6.97 (0.76) & -6.46 (0.86) & 0.13 (0.38) & 0.06 (0.25) & -0.04 (1.08) \\ 
  1.68 & 1.1 (1.42) & -0.66 (0.68) & -6.34 (0.94) & -5.67 (1.06) & 0.1 (0.34) & 0.02 (0.17) & 0.04 (0.94) \\ 
  2.09 & 1.15 (1.5) & -0.13 (0.35) & -5.2 (1.32) & -3.95 (1.38) & 0.08 (0.33) & 0.01 (0.11) & 0.04 (0.94) \\ 
  2.51 & 1.19 (1.55) & -0.01 (0.11) & -3.4 (1.74) & -1.71 (1.32) & 0.09 (0.36) & 0.01 (0.09) & 0.04 (0.94) \\ 
  3.07 & 1.3 (1.59) & 0 (0) & -0.64 (0.86) & -0.29 (0.54) & 0.11 (0.37) & 0 (0.09) & 0.04 (0.94) \\ 
  
			\hline
			\multicolumn{8}{c}{Model \uppercase\expandafter{\romannumeral2}, $\rho = 0.9$} \\
			1.05 & 1.26 (1.4) & -2.6 (0.73) & -5.92 (0.75) & -5.58 (0.8) & 0.21 (0.48) & 0.16 (0.42) & -0.75 (1.6) \\ 
  1.24 & 1.2 (1.4) & -1.72 (0.73) & -5.38 (0.84) & -4.9 (0.87) & 0.18 (0.45) & 0.11 (0.35) & -0.15 (1.1) \\ 
  1.52 & 1.19 (1.48) & -0.91 (0.66) & -4.47 (0.97) & -3.8 (0.89) & 0.13 (0.4) & 0.05 (0.23) & 0.04 (0.94) \\ 
  2 & 1.23 (1.57) & -0.19 (0.39) & -2.88 (0.98) & -2.2 (0.91) & 0.11 (0.36) & 0.02 (0.14) & 0.04 (0.94) \\ 
  2.48 & 1.3 (1.64) & -0.02 (0.15) & -1.38 (0.88) & -1 (0.75) & 0.12 (0.38) & 0.01 (0.09) & 0.04 (0.94) \\ 
  3.05 & 1.37 (1.67) & 0 (0) & -0.46 (0.62) & -0.28 (0.51) & 0.12 (0.38) & 0.01 (0.1) & 0.04 (0.94) \\ 
			\hline
			\hline
	\end{tabular}}
\end{table}

Table \ref{tab:bias} tabulates the mean and standard error of the bias of the estimated rank over simulation replicates. From this table, we observe that the biases of the estimated ranks from AIC, GCV, and CV are always greater than 0, for both Model I and Model II, which indicates that these three methods tend to yield a more complex model. Furthermore, their biases do not quite vanish when SNR increases in Model I and/or Model II, which means that these methods tend to overestimate the rank of the coefficient matrix regardless of the signal-to-noise ratio. By contrast, BIC, BICP, and GIC tend to underestimate the rank for both Model I and Model II, as evidenced by the negative signs of their mean biases. However, the magnitude of the bias decreases when SNR increases, which implies that these three methods perform better with a higher SNR. Finally, StARS-RRR achieves the smallest or the second smallest magnitude of bias in most settings except for very low SNR's. Overall, StARS-RRR performs the best among all the methods we investigated in rank determination for reduced-rank regression models.

{\color{blue}Based on referee's suggestions, we also conduct additional simulation studies to evaluate the prediction performance of StARS-RRR, as well as sensitivity analyses for the hyperparameters $\eta$, $N$, and $b$ in StARS-RRR. For more details, please refer to Section B of the supplementary material.}

\subsection{Application to breast cancer data}

In this subsection, we apply StARS-RRR to a real dataset to show its effectiveness in rank determination. In particular, we consider the breast cancer data \citep{witten2009a}, consisting of the gene expression measurements and comparative genomic hybridization (CGH) measurements for $n = 89$ patients. This dataset has been studied in previous work \citep{bunea2011optimal,chen2013reduced} and is available in the R package \textit{PMA} \citep{witten2009a}. The question of interest is to investigate the relationship between the DNA copy-number variations and gene expression profiles for the patients. We will use reduced-rank regression to model the copy number variations based on the gene expression profiles \citep{geng2011virtual,zhou2012prediction}. A reduced-rank regression model yields a low-rank coefficient matrix, with the estimated rank representing the number of linear combinations of gene expression measurements that enter into the prediction of CGH measurements. These linear combinations of gene expression measurements can be regarded as biologically functional pathways that affect the DNA copy number variations.

For the purpose of illustration, we analyzed the data on chromosome 13, where there are $ p=319 $ gene expression measurements and $ q=58 $ CGH measurements. The adaptive nuclear norm penalization method was used to estimate the coefficient matrix, and StARS-RRR was applied to determine the optimal rank, together with the other approaches used in the simulation. The estimated ranks are presented in the top panel of Table~\ref{app1}. On the one hand, AIC and GCV estimate the rank as 57 and 26, respectively, which seem to overestimate the number of functional pathways of practical interest. On the other hand, BIC, GIC, BICP and CV estimate the rank as 1 or 2 and haven't revealed enough biological relationships for further investigation. Instead, StARS-RRR reveals three linear combinations of gene expressions that potentially affect copy number variations, which include a reasonable number of biological pathways that deserve further investigation.

To visualize the relationship between the DNA copy-number variations and gene expression profiles, we also plot the estimated coefficient matrix with the tuning parameter selected by StARS-RRR in the form of a heat map in Figure \ref{figapp}. It is visually clear that there are three sets of CGH measurements, each of which follows a similar relationship with the gene profiles. The left 24 CGH measurements have a strong relationship with the genes as the coefficients have the largest magnitude among the three sets and have both positive and negative signs. The middle 3 CGH measurements have a moderate relationship that is similar to the above measurements in terms of signs although the magnitude of the coefficients is much smaller. The right 31 CGH measurements have a weak relationship with the genes as their coefficients are small in magnitude.

\iftrue
\begin{figure}
	\centering
	\includegraphics[width=\textwidth]{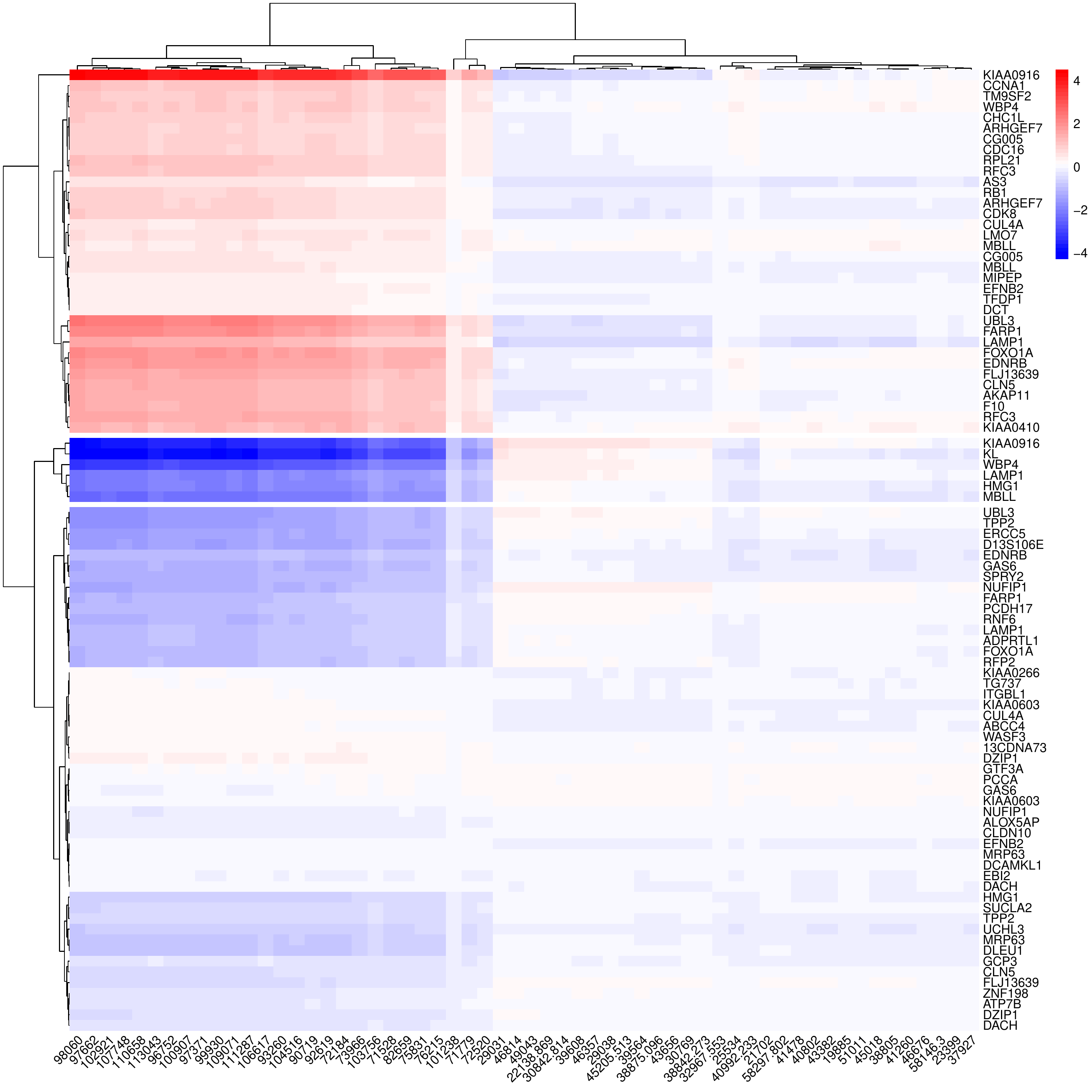}
	\caption{\label{figapp}The heat map of coefficient matrix obtained by the StARS-RRR for each CGH spot (row) and the gene (column). Genes with all of its  coefficients being 0 are not shown.}
\end{figure}
\fi

To provide further insight into the performance of different methods, we carried out the following random-splitting process for 100 times. The data were randomly split into a training set of size $ n_{\text{train}}=79 $ and a test set of size $ n_{\text{test}}=10 $. We first estimate the rank using the aforementioned methods, and then refit the model using a ridge generalization of reduced-rank regression model \citep{izenman2008modern} to derive the final estimated coefficient matrix $\hat{C}$. Finally, we calculate the mean squared prediction error (MSPE) as
$\mathrm{MSPE} = 100\times \lVert  Y_{\text{test}}-X_{\text{test}}\hat{C}\rVert_{\mathrm{F}}^{2}/(qn_{\text{test}})$,
where $X_{\text{test}}$ and $Y_{\text{test}}$ are the predictors and responses in the test set.

\begin{table}
	\caption{\label{app1}Comparison of the model fits to the real data for various tuning parameter selection methods. The mean squared prediction errors (MSPE) and the estimated ranks (Rank) are reported, with their standard errors in the parentheses.}
	\centering
	\resizebox{\textwidth}{!}{
		\begin{tabular}{lccccccc}
			\hline
			& AIC & BIC & GIC & BICP & GCV & CV & StARS-RRR \\
			\hline
			\multicolumn{8}{c}{Full Data} \\
			Rank & 57 & 2 & 1 & 2 & 26 & {\color{blue}0} & 3\\
			\hline	
			\multicolumn{8}{c}{Random-Splitting Process}\\	
  Rank & {\color{blue}56.98 (0.14)} & {\color{blue}49.87 (18.51)} & {\color{blue}1.00 (0.00)} & {\color{blue}1.06 (0.24)} & {\color{blue}30.76 (12.42)} & {\color{blue}0.80 (0.40)} & {\color{blue}2.65 (0.52)} \\ 
  MSPE & {\color{blue}4.32 (0.98)} & {\color{blue}4.20 (1.02)} & {\color{blue}3.71 (1.28)} & {\color{blue}3.71 (1.27)} & {\color{blue}4.25 (0.98)} & {\color{blue}3.84 (1.37)} & {\color{blue}3.41 (0.95)} \\ 
			\hline
	\end{tabular}}
\end{table}

Similar to the full-data performance, the bottom panel of Table~\ref{app1} shows that the tuning parameter selection methods can be divided into three groups according to their estimated ranks. The first group consists of AIC, BIC, and GCV, whose estimated ranks are quite large. The coefficient matrix has $58$ columns, so the coefficient matrices estimated by AIC and BIC are of almost full rank. Therefore, these methods result in a complex model that might overfit the data. The second group includes GIC, BICP, and CV, where the average estimated rank is close to 1. This might imply an underestimation of the rank. Underestimation of the rank leads to the lack of information to be extracted, which also explains why they underperform in terms of prediction accuracy. The last group is StARS-RRR, whose average estimated rank is between 2 and 3, a more reasonable rank than the other methods. Moreover, the MSPE of StARS-RRR is lower than all the other methods, which further convinces us that StARS-RRR yields an accurate estimation of rank for this dataset.

\section{Discussion}\label{sec:dis}

In this article, we propose a new method based on the stability approach to select the tuning parameter for reduced-rank regression. To the best of our knowledge, it is the first time that the stability approach is used in this framework. Our main contribution is twofold. First, we set up a general framework of the stability approach for rank determination including a new definition of instability and a new tuning parameter selection rule based on the instability. This framework is generally applicable to any matrix estimation problem and is referred to as StARS-RRR when specifically applied to reduced-rank regression. Second, we show that the rank determined by StARS-RRR is consistent to the true rank when the adaptive nuclear norm penalization is used. In fact, we provide a finite sample lower bound of the probability with which the rank is estimated correctly. Interestingly, there is an explicit relationship between this lower bound and the threshold $\eta$ used to select the tuning parameter in StARS-RRR, giving the method high interpretability.

Although StARS-RRR performs satisfactorily in both simulated and real data, it still has a few {\color{blue}limitations that need attention and/or }deserve further studies. First, the current definition of instability emphasizes the stability of the estimated rank but there could be alternative definitions. For example, if one concerns more the stability of the subspace corresponding to the estimated coefficient matrix instead of its rank, the instability could be defined as the variation of such subspaces estimated from the randomly draw subsamples. We refer to \citet{taeb2020false} for recent development of subspace stability in low-rank matrix estimation. {\color{blue} Second, we established the rank estimation consistency for StARS-RRR, which makes it more theoretically sound than most information criterion methods when applied to reduced-rank regression. However, we must point out that most information criterion methods do not need any assumptions on the design matrix, while StARS-RRR assumes an independent and identically distributed sample. This limitation needs to be taken into consideration in the application of StARS-RRR to dependent or heterogeneous data.} 
Third, this article represents only the first effort to apply the stability approach to low-rank matrix estimation and thus there are still many unsolved questions to explore. For example, if both sparsity and low rank are desired as in the sparse reduced-rank regression, it will be an interesting topic to extend StARS-RRR so that variable selection consistency and rank estimation consistency can be achieved simultaneously. Generalization of StARS-RRR to other matrix estimation problems such as low-rank matrix completion also warrants future investigation.

\section*{Acknowledgement}
Wen's research was supported in part by National Science Foundation of China under Grants 12171449 and 11801540; and Natural Science Foundation of Anhui Province under Grant BJ2040170017. Jiang's research was supported in part by the U.S. National Institutes of Health under Grant R01GM126549.

\section*{Supplementary Material}

Supplementary material includes technical proofs of the theoretical results as well as additional numerical results.

\bibliographystyle{apalike}
\bibliography{reference}

\end{document}


\baselineskip=20pt
\renewcommand{\thesection}{\Alph{section}}

\begin{center}
\textbf{\large Supplemental Material for ``Stability Approach to Regularization Selection for Reduced Rank Regression"}
\end{center}

This document provides detailed proofs for the theorems described in the main text, as well as additional numerical results as recommended by the referees.

\section{Technical Proofs}
\renewcommand{\theequation}{A.\arabic{equation}}
\setcounter{equation}{0}

\subsection{Lemmas}

\begin{lemma}\label{Lemma-1}
	Suppose Assumption~1 holds. Then for any $t > 0$, $E\{d_1(P^b E^b)\} \le \sigma (\sqrt{r_{\mathrm{x}^b}} + \sqrt{q})$ and $P [d_1(P^b E^b) \ge E\{d_1(P^b E^b)\} + \sigma t] \le \exp(-t^2/2)$.
\end{lemma}

\begin{proof}
	Lemma \ref{Lemma-1} is identical to Lemma 3 in \citet{bunea2011optimal}.
\end{proof}

\begin{lemma}\label{Lemma-2}
	Suppose Assumptions~1--2 hold. For any $\delta$ satisfying that $\exp\{-\theta^2 (\sqrt{r_{\mathrm{x}^b}} + \sqrt{q})^2 /8\} \le \delta < 1/2$ with a large enough $r_{\mathrm{x}^b} + q$, there exist $\lambda_l,\ \lambda_m,\ \lambda_h$ with $0 < \lambda_l \le \lambda_m \le \lambda_h \le [(1 + \theta/2) \sigma (\sqrt{r_{\mathrm{x}^b}} + \sqrt{q})]^{\gamma + 1}$ such that $P(d_{r^{\ast} + 1}(P^b Y^b) > \lambda_l^{1/(\gamma + 1)}) = 1 - \delta$, $P(d_{r^{\ast} + 1}(P^b Y^b) > \lambda_m^{1/(\gamma + 1)}) = \delta$, and $P(d_{r^{\ast} + 1}(P^b Y^b) > \lambda_h^{1/(\gamma + 1)}) = \exp\{-\theta^2 (\sqrt{r_{\mathrm{x}^b}} + \sqrt{q})^2 /8\}$.
	
\end{lemma}

\begin{proof}
	
	On the one hand, $d_{r^{\ast} + 1}(P^b Y^b)$ is a continuous function of $P^b Y^b$. Further, $P^b Y^b = X^b C + P^b E^b$ where $E^b$ has independent $N(0, \sigma^2)$ entries, then $d_{r^{\ast} + 1}(P^b Y^b)$ has a continuous distribution. Denote by $q_{\delta}$ and $q_{1-\delta}$ the $\delta$-quantile and $(1-\delta)$-quantile of such a distribution, respectively. Let $\lambda_l = q_{\delta}^{\gamma + 1}$ and $\lambda_m = q_{1-\delta}^{\gamma + 1}$, then
	\[P[d_{r^{\ast} + 1}(P^b Y^b) > \lambda_l^{1/(\gamma + 1)}] = 1-\delta \quad \text{and} \quad P[d_{r^{\ast} + 1}(P^b Y^b) > \lambda_m^{1/(\gamma + 1)}] = \delta.\]
	
	On the other hand, based on Weyl’s inequalities on singular values \citep{franklin2012matrix} and observing that $P^b Y^b = X^b C + P^b E^b$, we have 
	\begin{equation}\label{weyl}
	|d_s(P^b Y^b) - d_s(X^b C)| \le d_1(P^b E^b), \quad s = 1, \ldots, b \wedge q.
	\end{equation}
	When $s = r^{\ast} + 1$, $d_{r^{\ast} + 1}(P^b Y^b) \le d_{r^{\ast} + 1}(X^b C) + d_1(P^b E^b)$ and by Lemma \ref{Lemma-1},
	\[P[d_{r^{\ast} + 1}(P^b Y^b) < d_{r^{\ast} + 1}(X^b C) + \sigma (\sqrt{r_{\mathrm{x}^b}} + \sqrt{q}) + \sigma t] \ge 1 - \exp(-t^2/2),\]
	for any $t > 0$.
	Since $d_{r^{\ast} + 1}(X^b C) = 0$, let $t = \theta (\sqrt{r_{\mathrm{x}^b}} + \sqrt{q}) /2$, we have 
	\[ P[d_{r^{\ast} + 1}(P^b Y^b) > (1 + \theta/2) \sigma (\sqrt{r_{\mathrm{x}^b}} + \sqrt{q}) ] \le \exp\{-\theta^2 (\sqrt{r_{\mathrm{x}^b}} + \sqrt{q})^2 /8\}.\]
	Therefore, there exists $\lambda_h \le [(1 + \theta/2) \sigma (\sqrt{r_{\mathrm{x}}} + \sqrt{q})]^{\gamma + 1}$ such that 
	\begin{equation}\label{dr*+1}
	P[d_{r^{\ast} + 1}(P^b Y^b) > \lambda_h^{1/(\gamma + 1)}] = \exp\{-\theta^2 (\sqrt{r_{\mathrm{x}^b}} + \sqrt{q})^2 /8\}.
	\end{equation}
	
	Finally, $\lambda_l \le \lambda_m$ is implied by the fact that $\delta < 1/2$. In addition, $\lambda_m \le \lambda_h$ is implied by the fact that $\delta \ge \exp\{-\theta^2 (\sqrt{r_{\mathrm{x}^b}} + \sqrt{q})^2 /8\}$.
	
	
\end{proof}

\subsection{Proof of Theorem 3.1}

For part (a), by the Weyl's inequalities, $d_{r^{\ast}}(P^b Y^b) \ge d_{r^{\ast}}(X^b C) - d_1(P^b E^b)$ and by Lemma \ref{Lemma-1}, 
\[P[d_{r^{\ast}}(P^b Y^b) > d_{r^{\ast}}(X^b C) - \sigma (\sqrt{r_{\mathrm{x}^b}} + \sqrt{q}) - \sigma t] \ge 1 - \exp(-t^2/2),\]
for any $t > 0$. Similar to the proof of Lemma \ref{Lemma-2}, since $d_{r^{\ast}}(X^b C) \ge 2 (1 + \theta) \sigma (\sqrt{r_{\mathrm{x}^b}} + \sqrt{q}) $,
\[ P[d_{r^{\ast}}(P^b Y^b) > (1 + 2\theta) \sigma (\sqrt{r_{\mathrm{x}^b}} + \sqrt{q}) - \sigma t] \ge 1 - \exp(-t^2/2). \]
Letting $t = \theta (\sqrt{r_{\mathrm{x}^b}} + \sqrt{q}) /2$, we have that
\[ P[d_{r^{\ast}}(P^b Y^b) > (1 + 3\theta/2) \sigma (\sqrt{r_{\mathrm{x}^b}} + \sqrt{q}) ] \ge 1 - \exp\{-\theta^2 (\sqrt{r_{\mathrm{x}^b}} + \sqrt{q})^2 /8\}. \]
When $\lambda \le [(1 + 3\theta/2) \sigma (\sqrt{r_{\mathrm{x}^b}} + \sqrt{q})]^{\gamma + 1}$, this leads to 
\begin{equation}\label{dr*}
P[d_{r^{\ast}}(P^b Y^b) > \lambda^{1/(\gamma + 1)}] \ge 1 - \exp\{-\theta^2 (\sqrt{r_{\mathrm{x}^b}} + \sqrt{q})^2 /8\}.
\end{equation}
Since $\lambda_h \le \lambda \le [(1 + 3\theta/2) \sigma (\sqrt{r_{\mathrm{x}^b}} + \sqrt{q})]^{\gamma + 1}$, combining (\ref{dr*+1}) and (\ref{dr*}) leads to 
\[ P[d_{r^{\ast}+1}(P^b Y^b) \le \lambda^{1/(\gamma + 1)} < d_{r^{\ast}}(P^b Y^b)] \ge 1 - 2\exp\{-\theta^2 (\sqrt{r_{\mathrm{x}^b}} + \sqrt{q})^2 /8\}. \]
Since $\hat{r}^b_{\lambda}=\max\{r:d_{r}(P^b Y^b)>\lambda^{1/(\gamma + 1)}\}$ as in (4),
\[ P(\hat{r}^b_{\lambda} = r^{\ast}) \ge 1 - 2\exp\{-\theta^2 (\sqrt{r_{\mathrm{x}^b}} + \sqrt{q})^2 /8\}.\]
Furthermore, as $P^b Y^b$ has at most $r_{\mathrm{x}^b} \wedge q$ positive singular values,
\begin{align*}
&\mathrm{var}(\hat{r}^b_{\lambda}) \\= {} & E(\hat{r}^b_{\lambda}) - [E(\hat{r}^b_{\lambda})]^2 \\
\le {} & (r^{\ast})^2 [1 - 2\exp\{-\theta^2 (\sqrt{r_{\mathrm{x}^b}} + \sqrt{q})^2 /8\}] + 2 (r_{\mathrm{x}^b} \wedge q)^2 \exp\{-\theta^2 (\sqrt{r_{\mathrm{x}^b}} + \sqrt{q})^2 /8\} \\
& - (r^{\ast})^2 [1 - 2\exp\{-\theta^2 (\sqrt{r_{\mathrm{x}^b}} + \sqrt{q})^2 /8\}]^2 \\
= {} & 2 (r^{\ast})^2 \exp\{-\theta^2 (\sqrt{r_{\mathrm{x}^b}} + \sqrt{q})^2 /8\} [1 - 2\exp\{-\theta^2 (\sqrt{r_{\mathrm{x}^b}} + \sqrt{q})^2 /8\}] \\
&+ 2 (r_{\mathrm{x}^b} \wedge q)^2 \exp\{-\theta^2 (\sqrt{r_{\mathrm{x}^b}} + \sqrt{q})^2 /8\}\\
\le {} & 4 (r_{\mathrm{x}^b} \wedge q)^2 \exp\{-\theta^2 (\sqrt{r_{\mathrm{x}^b}} + \sqrt{q})^2 /8\},
\end{align*} 
because $r^{\ast} \le r_{\mathrm{x}^b} \wedge q$ as in Assumption 3.

For part (b), as $\hat{r}^b_{\lambda}$ is a decreasing function of $\lambda$, $P(\hat{r}^b_\lambda \ge r^{\ast}) \ge 1 - 2\exp\{-\theta^2 (\sqrt{r_{\mathrm{x}^b}} + \sqrt{q})^2 /8\}$ is implied by the result in part (a). Further, for any $\lambda$ satisfying that $\lambda \ge \lambda_m$,
\[P[d_{r^{\ast} + 1}(P^b Y^b) > \lambda^{1/(\gamma + 1)}] \le \delta, \]
which implies that $P(\hat{r}^b_\lambda \ge r^{\ast} + 1) \le \delta$ from the definition of $\hat{r}^b_{\lambda}$. Thus, 
\[P(\hat{r}^b_\lambda = r^{\ast}) = P(\hat{r}^b_\lambda \ge r^{\ast}) - P(\hat{r}^b_\lambda \ge r^{\ast} + 1) \ge 1 - \delta - 2\exp\{-\theta^2 (\sqrt{r_{\mathrm{x}^b}} + \sqrt{q})^2 /8\}.\]

For part (c), from Lemma \ref{Lemma-2}, $P(d_{r^{\ast} + 1}(P^b Y^b) > \lambda_m^{1/(\gamma + 1)}) = \delta$. Then, for any $\lambda$ satisfying $0 \le \lambda \le \lambda_m$, $P(d_{r^{\ast} + 1}(P^b Y^b) > \lambda^{1/(\gamma + 1)}) \ge \delta$, which leads to $P(\hat{r}^b_{\lambda} \ge r^{\ast} + 1) \ge \delta$ from the definition of $\hat{r}^b_{\lambda}$.

The estimated rank $\hat{r}^b_{\lambda}$ can be alternatively written as $\hat{r}^b_{\lambda} = \sum_{i=1}^{r_{\mathrm{x}^b} \wedge q} I\{d_{i}(P^b Y^b)>\lambda^{1/(\gamma + 1)}\}$, where $I(\cdot)$ is the indicator function. Therefore,
\begin{align*}
\mathrm{var}(\hat{r}_{\lambda}^{b}) = {} & \sum_{i=1}^{r_{\mathrm{x}^b} \wedge q} \mathrm{var}[ I\{d_{i}(P^b Y^b)>\lambda^{1/(\gamma + 1)}\}] \\
&+ \sum_{i=1}^{r_{\mathrm{x}^b} \wedge q} \sum_{j=1}^{r_{\mathrm{x}^b} \wedge q} \mathrm{cov}[ I\{d_{i}(P^b Y^b)>\lambda^{1/(\gamma + 1)}\}, I\{d_{j}(P^b Y^b)>\lambda^{1/(\gamma + 1)}\}].
\end{align*}
It is noteworthy that, for $1 \le i, j \le r_{\mathrm{x}^b} \wedge q$,
\begin{align*}
& \mathrm{cov}[ I\{d_{i}(P^b Y^b) > \lambda^{1/(\gamma + 1)}\}, I\{d_{j}(P^b Y^b) > \lambda^{1/(\gamma + 1)}\} ] \\
= {} & P[ d_{i}(P^b Y^b) > \lambda^{1/(\gamma + 1)}, d_{j}(P^b Y^b) > \lambda^{1/(\gamma + 1)} ] \\
&- P[ d_{i}(P^b Y^b) > \lambda^{1/(\gamma + 1)} ] P[ d_{j}(P^b Y^b) > \lambda^{1/(\gamma + 1)} ] \\
= {} & P[ d_{i \wedge j}(P^b Y^b) > \lambda^{1/(\gamma + 1)} ] [1 - P\{ d_{i \vee j}(P^b Y^b) > \lambda^{1/(\gamma + 1)} \}] \\
\ge {} & 0.
\end{align*}
Therefore, 
\begin{equation*} 
\mathrm{var}(\hat{r}^b_{\lambda}) \ge \sum_{i=1}^{r_{\mathrm{x}^b} \wedge q} \mathrm{var}[ I\{d_{i}(P^b Y^b)>\lambda^{1/(\gamma + 1)}\}] \ge \mathrm{var}[ I\{d_{r^{\ast} + 1}(P^b Y^b)>\lambda^{1/(\gamma + 1)}\}].
\end{equation*}
When $\lambda_l \le \lambda \le \lambda_m$, $\delta \le P[d_{r^{\ast} + 1}(P^b Y^b)>\lambda^{1/(\gamma + 1)}] \le 1-\delta$, which leads to that $\mathrm{var}(\hat{r}^b_{\lambda}) \ge \delta (1-\delta)$.

\subsection{Proof of Theorem 3.2}


For any $\lambda > 0$, denote $\mu_2^b(\lambda) = \mathrm{var}(\hat{r}^b_{\lambda})$ as the variance of the estimated rank from the reduced rank regression based on the subsample $(Y_1,X_1),\ldots,(Y_b,X_b)$. The parameter $\mu_2^b(\lambda)$ can be written as $\mu_2^b(\lambda) = m_2^b(\lambda) - \{m_1^b(\lambda)\}^2$, where $m_2^b(\lambda) = \mathrm{E}[(\hat{r}^b_{\lambda})^2]$ and $m_1^b(\lambda) = \mathrm{E}(\hat{r}^b_{\lambda})$. Note that $\hat{m}_1^b(\lambda) = \sum_{i=1}^N \hat{r}_{\lambda}(S_i)/N$ and $\hat{m}_2^b(\lambda) = \sum_{i=1}^N \hat{r}^2_{\lambda}(S_i)/N$ are their corresponding U-statistics of order $b$, and further that $0 \le \hat{r}_{\lambda}(S_i) \le r_{\mathrm{x}} \wedge q$ for $i=1,\ldots,N$. Hence, by Hoeffding's inequality for U-statistics \citep{serfling2009approximation}, we have, for any $t > 0$,
\begin{align}
P\left[\left| \hat{m}_1^b(\lambda) - m_1^b(\lambda) \right| > t\right] & \le 2 \exp[-2 n t^2/\{(r_{\mathrm{x}} \wedge q)^2 b\}], \label{hoeffding.1}\\
P\left[\left| \hat{m}_2^b(\lambda) - m_2^b(\lambda) \right| > t\right] & \le 2 \exp[-2 n t^2/\{(r_{\mathrm{x}} \wedge q)^4 b\}]. \label{hoeffding.2}
\end{align}
In addition, $m_1^b(\lambda) \le r_{\mathrm{x}^b} \wedge q$ and $m_2^b(\lambda) \le (r_{\mathrm{x}^b} \wedge q)^2$ for any $\lambda > 0$ as $P^b Y^b$ has at most $r_{\mathrm{x}^b} \wedge q$ positive singular values. Setting $t = m_1^b(\lambda) $ in (\ref{hoeffding.1}) leads to that 
\begin{equation*}
P\left[\hat{m}_1^b(\lambda) \le 2 m_1^b(\lambda) \right] \ge 1 - 2 \exp[-2 n \{m_1^b(\lambda)\}^2/\{(r_{\mathrm{x}} \wedge q)^2 b\}].
\end{equation*}
For any $\lambda > 0$ such that $m_1^b(\lambda) \ge 1/2$, with probability at least $1 - 2 \exp[- n /\{2(r_{\mathrm{x}} \wedge q)^2 b\}]$, $\hat{m}_1^b(\lambda) \le 2(r_{\mathrm{x}^b} \wedge q)$. 

From the definition of $S^{2}(\hat{r}_{\lambda})$ in (5), with probability at least $1 - 2 \exp[- n /\{2(r_{\mathrm{x}} \wedge q)^2 b\}]$,
\begin{align*}
& \left|S^{2}(\hat{r}_{\lambda}) - \mu_2^b(\lambda) \right| \\ \le {}& \frac{N}{N-1} \left| \hat{m}_2^b(\lambda) - m_2^b(\lambda) \right| +\frac{N}{N-1} \left| \left\{ \hat{m}_1^b(\lambda) \right\}^2 - \{m_1^b(\lambda)\}^2 \right|  + \frac{1}{N-1} \left[m_2^b(\lambda) + \{m_1^b(\lambda)\}^2\right] \notag \\
\le {}& \frac{N}{N-1} \left| \hat{m}_2^b(\lambda) - m_2^b(\lambda) \right| + \frac{N}{N-1} \left| \{ \hat{m}_1^b(\lambda) + m_1^b(\lambda) \} \{ \hat{m}_1^b(\lambda) - m_1^b(\lambda) \} \right|  + \frac{2}{N-1} (r_{\mathrm{x}^b} \wedge q)^2 \notag\\
\le {}&\frac{N}{N-1} \left| \hat{m}_2^b(\lambda) - m_2^b(\lambda) \right| + \frac{N}{N-1} \left| 3(r_{\mathrm{x}^b} \wedge q) \{ \hat{m}_1^b(\lambda) - m_1^b(\lambda) \} \right| + \frac{2}{N-1} (r_{\mathrm{x}^b} \wedge q)^2.
\end{align*}

For $t \ge 6(r_{\mathrm{x}^b} \wedge q)^2/(N-1) $, 
\[\frac{2}{N-1} (r_{\mathrm{x}^b} \wedge q)^2 \le \frac{t}{3}.\]
Thus, for any $\lambda > 0$ such that $m_1^b(\lambda) \ge 1/2$ and $6(r_{\mathrm{x}^b} \wedge q)^2/(N-1) \le t \le 9 (r_{\mathrm{x}} \wedge q)$,
\begin{align*}
& P\left[\left| S^{2}(\hat{r}_{\lambda}) - \mu_2^b(\lambda) \right| > t \right] \\
\le{}& 2 \exp[- n /\{2(r_{\mathrm{x}} \wedge q)^2 b\}] + P\left[ \frac{N}{N-1} \left| \hat{m}_2^b(\lambda) - m_2^b(\lambda) \right| > \frac{t}{3} \right] \\ &+ P\left[ \frac{N}{N-1} \left|3(r_{\mathrm{x}^b} \wedge q) \{ \hat{m}_1^b(\lambda) - m_1^b(\lambda) \} \right| > \frac{t}{3} \right]\\
\le {}& 2 \exp[- n /\{2(r_{\mathrm{x}} \wedge q)^2 b\}] + P\left[ | \hat{m}_2^b(\lambda) - m_2^b(\lambda) | > \frac{t}{6} \right] \\ &+ P\left[ | \hat{m}_1^b(\lambda) - m_1^b(\lambda) | > \frac{t}{18(r_{\mathrm{x}^b} \wedge q)} \right] \\
\le {}& 2 \exp[- n /\{2(r_{\mathrm{x}} \wedge q)^2 b\}] + 2 \exp[- n t^2/\{(18(r_{\mathrm{x}} \wedge q)^4 b\}] \\& + 2 \exp[ -n t^2/ \{162(r_{\mathrm{x}^b} \wedge q)^2 (r_{\mathrm{x}} \wedge q)^2 b\} ] \\
\le {}& 6 \exp[ -n t^2/ \{162(r_{\mathrm{x}} \wedge q)^4 b\} ].
\end{align*}

\subsection{Proof of Corollary 3.1}

Denote $\lambda_u = [(1 + 3\theta/2) \sigma (\sqrt{r_{\mathrm{x}^b}} + \sqrt{q})]^{\gamma + 1}$ for simplicity of notation. Based on Theorem 3.1, for any $\delta \in [\exp\{-\theta^2 (\sqrt{r_{\mathrm{x}^b}} + \sqrt{q})^2 /8\}, 1/2)$
\begin{align}
\mu_2^b(\lambda) \ge \delta (1-\delta), & \quad \text{when}\ \lambda_l \le \lambda \le \lambda_m, \label{var.lower.bound} \\
\mu_2^b(\lambda)\le 4 (r_{\mathrm{x}^b} \wedge q)^2 \exp\{-\theta^2 (\sqrt{r_{\mathrm{x}^b}} + \sqrt{q})^2 /8\}, & \quad \text{when}\ \lambda_h \le \lambda \le \lambda_u. \label{var.upper.bound.1} 
\end{align}
Thus, for any fixed $C > 3$, if $\delta \in [8 C (r_{\mathrm{x}^b} \wedge q)^2 \exp\{-\theta^2 (\sqrt{r_{\mathrm{x}^b}} + \sqrt{q})^2 /8\}, 1/2)$ with a large enough $r_{\mathrm{x}^b} + q$, (\ref{var.lower.bound}) still holds and (\ref{var.upper.bound.1}) becomes
\begin{equation}
\mu_2^b(\lambda)\le \delta(1-\delta)/C, \quad \text{when}\ \lambda_h \le \lambda \le \lambda_u. \label{var.upper.bound.2} 
\end{equation}

Based on Theorem 3.2, for any $t$ satisfying that $6(r_{\mathrm{x}^b} \wedge q)^2/(N-1) \le t \le 9 (r_{\mathrm{x}} \wedge q)$,
\begin{align*}
& P\left[ \max_{\lambda \in \Lambda: m_1^b(\lambda) \ge 1/2} \left|S^{2}(\hat{r}_{\lambda}) - \mu_2^b(\lambda)\right| > t \right] 
\\ \le {} & 6 \sum_{\lambda \in \Lambda: m_1^b(\lambda) \ge 1/2} \exp[ -n t^2/ \{162(r_{\mathrm{x}} \wedge q)^4 b\} ] \\
\le {} & 6 K \exp[ -n t^2 / \{162(r_{\mathrm{x}} \wedge q)^4 b\} ].
\end{align*}
Therefore, when $12C (r_{\mathrm{x}^b} \wedge q)^2/(N-1) \le \delta < 1/2$, we can set $t = (1/C) \delta (1-\delta)$ in the above inequality, which leads to
\begin{align*}
&  P\left[ \max_{\lambda \in \Lambda: m_1^b(\lambda) \ge 1/2} \left|S^{2}(\hat{r}_{\lambda}) - \mu_2^b(\lambda)\right| > (1/C) \delta (1-\delta) \right] \\ \le {} & 6 K \exp[ -n \delta^2 (1-\delta)^2 / \{162 C^2 (r_{\mathrm{x}} \wedge q)^4 b \} ] \\
\le {} & 6 K \exp[ -n \delta^2 / \{648 C^2 (r_{\mathrm{x}} \wedge q)^4 b \} ].
\end{align*}
In other words, when $\max[8 C (r_{\mathrm{x}^b} \wedge q)^2 \exp\{-\theta^2 (\sqrt{r_{\mathrm{x}^b}} + \sqrt{q})^2 /8\}, 12C (r_{\mathrm{x}^b} \wedge q)^2/(N-1)] \le \delta < 1/2$, with probability at least $1 - 6 K \exp[ -n \delta^2 / \{648 C^2 (r_{\mathrm{x}} \wedge q)^4 b \} ]$,
\begin{equation}
\max_{\lambda \in \Lambda: m_1^b(\lambda) \ge 1/2} \left|S^{2}(\hat{r}_{\lambda}) - \mu_2^b(\lambda)\right| \le (1/C) \delta (1-\delta).  \label{var.diff}
\end{equation}

Whenever $\lambda \in [\lambda_l, \lambda_m]$ or $\lambda \in [\lambda_h, \lambda_u]$, $m_1^b(\lambda) = E(\hat{r}^b_\lambda) \ge 1/2$ with a large enough $r_{\mathrm{x}} + q$ based on Theorem 1. Combining (\ref{var.lower.bound}), (\ref{var.upper.bound.2}), and (\ref{var.diff}), 
\begin{align*}
\min \left\{ S^{2}(\hat{r}_{\lambda}) : \lambda \in \Lambda \cap [\lambda_l, \lambda_m] \right\} &\ge [(C-1)/C] \delta (1 - \delta), \\
\max \left\{ S^{2}(\hat{r}_{\lambda}) : \lambda \in \Lambda \cap [\lambda_h, \lambda_u] \right\} &\le (2/C) \delta (1 - \delta).
\end{align*}


\subsection{Proof of Theorem 3.3}

In Corollary 3.1, setting $C = 4$ leads to that, for any $\delta$ satisfying $\max[32 (r_{\mathrm{x}^b} \wedge q)^2 \exp\{-\theta^2 (\sqrt{r_{\mathrm{x}^b}} + \sqrt{q})^2 /8\}, 48 (r_{\mathrm{x}^b} \wedge q)^2/(N-1)] \le \delta < 1/2$, the following bounds
\begin{align*}
\min \left\{ \hat{D}(\lambda) : \lambda \in \Lambda \cap [\lambda_l, \lambda_m] \right\} & \ge 3\delta(1-\delta)/4 > \eta, \\
\max \left\{ \hat{D}(\lambda) : \lambda \in \Lambda \cap [\lambda_h, \lambda_u] \right\} &\le \delta(1-\delta)/2 = \eta,
\end{align*}
hold with probability at least $1 - 6 K \exp[ -n \delta^2 / \{10368 (r_{\mathrm{x}} \wedge q)^4 b \}]$, which implies that the optimal tuning parameter selected by StARS-RRR $\hat{\lambda}$ must lie in $[\lambda_m, \lambda_h]$ or $[\lambda_h, \lambda_u]$, i.e., 
\[ P(\hat\lambda \in [\lambda_m, \lambda_u]) \ge 1 - 6 K \exp[ -n \delta^2 / \{10368 (r_{\mathrm{x}} \wedge q)^4 b \}]. \]

We can bound the probability of $\{\hat{r}^b_{\hat\lambda} \ne r^{\ast}\}$ as follows:
\begin{align*}
& P(\hat{r}^b_{\hat\lambda} \ne r^{\ast})\\ \le{} & P(\hat{r}^b_{\hat\lambda} \ne r^{\ast}, \hat\lambda \in \Lambda \cap [\lambda_m, \lambda_h]) + P(\hat{r}^b_{\hat\lambda} \ne r^{\ast}, \hat\lambda \in \Lambda \cap [\lambda_h, \lambda_u]) + P(\hat\lambda \notin [\lambda_m, \lambda_u])
\\ \le{}& \sum_{\lambda \in \Lambda \cap [\lambda_m, \lambda_h]} P(\hat{r}^b_{\lambda} \ne r^{\ast}, \hat\lambda = \lambda) + \sum_{\lambda \in \Lambda \cap [\lambda_h, \lambda_u]} P(\hat{r}^b_{\lambda} \ne r^{\ast}, \hat\lambda = \lambda) + P(\hat\lambda \notin [\lambda_m, \lambda_u]).
\end{align*}
From parts (a) and (b) in Theorem 3.1, 
\begin{align*}
P(\hat{r}^b_{\lambda} \ne r^{\ast}) \le \delta + 2\exp\{-\theta^2 (\sqrt{r_{\mathrm{x}^b}} + \sqrt{q})^2 /8\}, \quad &\text{when } \lambda \in  [\lambda_m, \lambda_h]; \\
P(\hat{r}^b_{\lambda} \ne r^{\ast}) \le 2\exp\{-\theta^2 (\sqrt{r_{\mathrm{x}^b}} + \sqrt{q})^2 /8\}, \quad  &\text{when } \lambda \in  [\lambda_h, \lambda_u].
\end{align*}
Let $K_1 = |\Lambda \cap [\lambda_m, \lambda_h]|$ and $K_2 = |\Lambda \cap [\lambda_h, \lambda_u]|$. Then,
\begin{align}
& P(\hat{r}^b_{\hat\lambda} \ne r^{\ast}) \\
\le {} & K_1 (\delta + 2 \exp\{-\theta^2 (\sqrt{r_{\mathrm{x}^b}} + \sqrt{q})^2 /8\}) + 2 K_2 \exp\{-\theta^2 (\sqrt{r_{\mathrm{x}^b}} + \sqrt{q})^2 /8\} + P(\hat\lambda \notin [\lambda_m, \lambda_u]) \notag \\
\le {} & K \delta + 2 K \exp\{-\theta^2 (\sqrt{r_{\mathrm{x}^b}} + \sqrt{q})^2 /8\} +  6 K \exp[ -n \delta^2 / \{10368 (r_{\mathrm{x}} \wedge q)^4 b \}],\label{inconsistency.probability}
\end{align}
where $\max[32 (r_{\mathrm{x}^b} \wedge q)^2 \exp\{-\theta^2 (\sqrt{r_{\mathrm{x}^b}} + \sqrt{q})^2 /8\}, 48 (r_{\mathrm{x}^b} \wedge q)^2/(N-1)] \le \delta < 1/2$.

Under the conditions that for a given $\alpha > 0$ and a given $\beta > 0$,
\begin{align*}
(r_{\mathrm{x}^b} \wedge q)^2 \exp\{-\theta^2 (\sqrt{r_{\mathrm{x}^b}} + \sqrt{q})^2/8\} &= o(n^{-\alpha}),\\
(r_{\mathrm{x}} \wedge q)^4 b/n &= o(n^{-\beta}).
\end{align*}
Then, we can choose $\delta = n^{-(\alpha \wedge \beta)/2}$ such that 
$\max[32 (r_{\mathrm{x}^b} \wedge q)^2 \exp\{-\theta^2 (\sqrt{r_{\mathrm{x}^b}} + \sqrt{q})^2 /8\}, 48 (r_{\mathrm{x}^b} \wedge q)^2/(N-1)] \le \delta < 1/2$ when $n$ is large enough, and 
\[ 6 K \exp[ -n \delta^2 / \{10368 (r_{\mathrm{x}} \wedge q)^4 b \}] \le 6 K \exp[ -n^{1-\beta} /\{10368 (r_{\mathrm{x}} \wedge q)^4 b \} ] \to 0.\]
Therefore, the probability bound in (\ref{inconsistency.probability}) tends to zero when $r_{\mathrm{x}^b} + q \to \infty$ because $K$ is assumed to be fixed.

\section{Additional Simulation Results}
\renewcommand{\thesubsection}{\Alph{section}.\arabic{subsection}}
\subsection{Prediction Performance}

\subsubsection{Impact of rank estimation on prediction}

Rank estimation is the focus of this paper as better rank estimation can indeed lead to better prediction. Hereby, we perform a simulation study to illustrate the impact of rank estimation on prediction. We adopt the same simulation settings in the main text to generate the training and test data sets, where the sample size of the test data is set to be the same as that of the training data. For each setting, we replicate 100 times. Given that the simulation results are similar, we only present the results of three settings for the purpose of illustration, which are: 
\begin{itemize}
    \item [(1)] $\rho=0.9$ and SNR=1.57 under Model \uppercase\expandafter{\romannumeral1};
    \item [(2)] $\rho=0.1$ and SNR=1.15 under Model \uppercase\expandafter{\romannumeral2};
    \item [(3)] $\rho=0.5$ and SNR=3.09 under Model \uppercase\expandafter{\romannumeral2}.
\end{itemize}

For each given rank $\hat r$ in $\{1,2,\dots,(p\wedge q)-1\}$, we derive the estimated coefficient matrix $\hat{C}$ using the ridge estimation method \citep{izenman2008modern}. Denoting the test data as $\{\widetilde{X},\widetilde{Y}\}$, we calculate the mean squared prediction error (MSPE) as
$$
\mathrm{MSPE}=100\times\frac{\lVert\widetilde{  Y}-\widetilde{ X}\hat{ C}\rVert_{\mathrm{F}}^{2}}{nq}.
$$

The simulation results are summarized in Figure~\ref{fig:rd_pe}, in which the top, middle, and bottom panels correspond to settings (1)--(3). From Figure~\ref{fig:rd_pe}, it is obvious that a better rank estimate leads to a better prediction. Overall, the true rank estimate always corresponds to the smallest mean MSPE, and the MSPE is larger than the one with the true rank when the rank is under-estimated or over-estimated. In particular, when the estimated rank is smaller than the true rank, the information extracted by the model is insufficient, resulting in a poor prediction accuracy. When the estimated rank is larger than the true rank, the extracted signal is mixed with some noises, resulting in a slightly lower prediction accuracy. Furthermore, as the discrepancy between the estimated rank and the true rank becomes larger, the MSPE increases quickly especially in the underestimation scenario.

\begin{figure}[htbp]
\centering 
\includegraphics[width=0.75\textwidth]{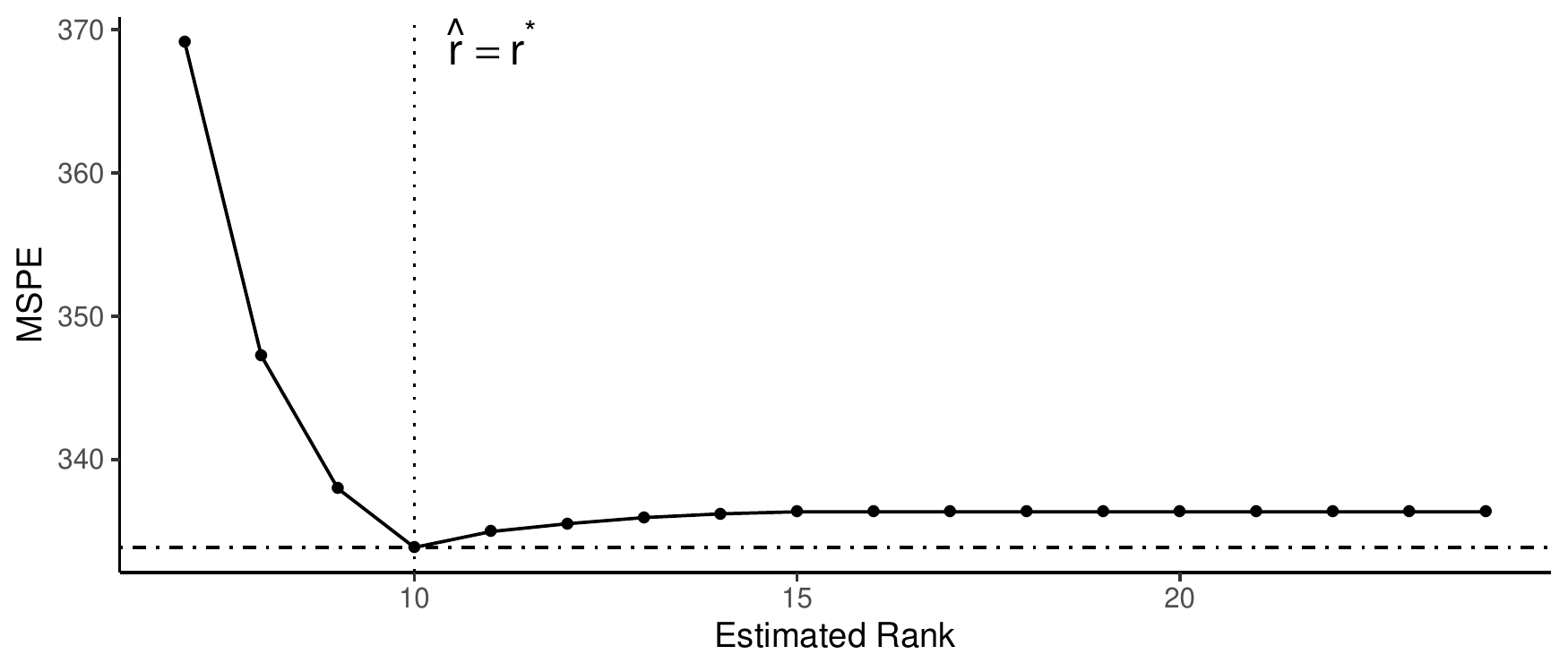}		        \includegraphics[width=0.75\textwidth]{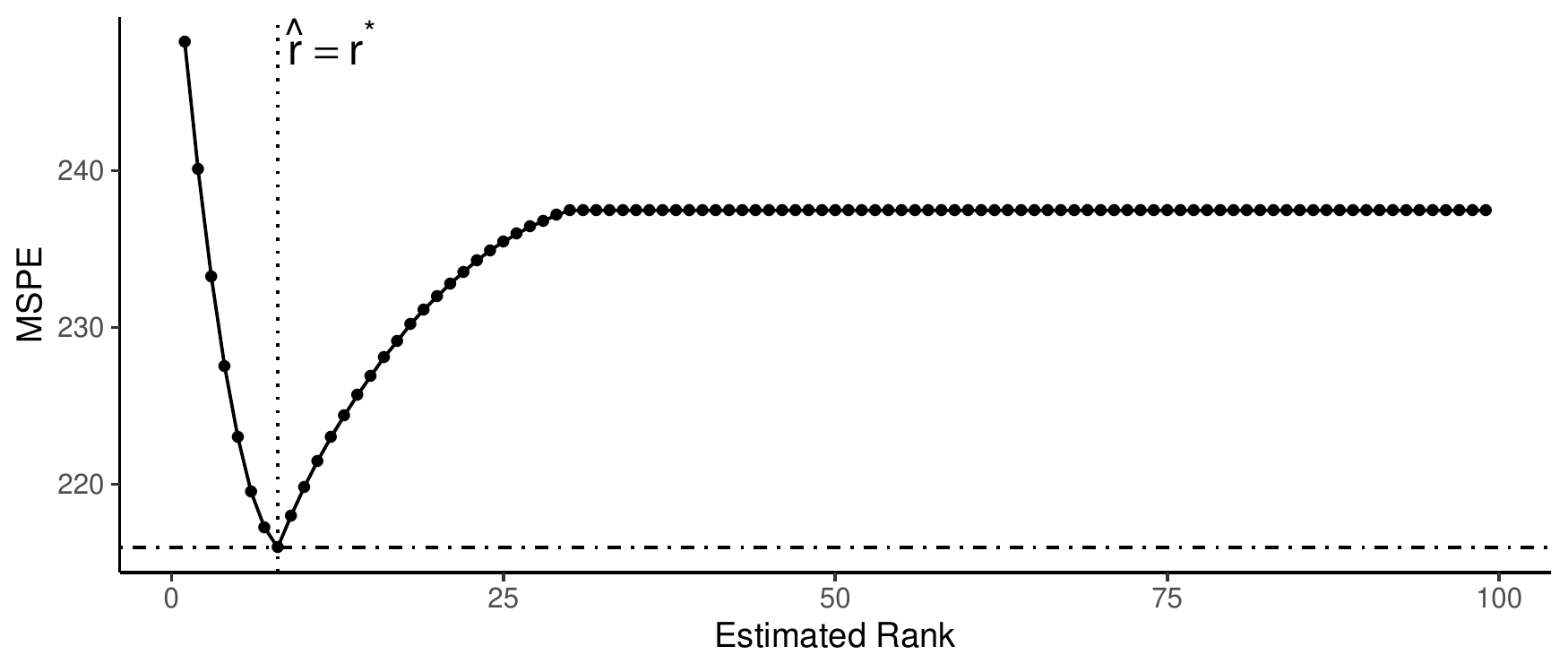}
\includegraphics[width=0.75\textwidth]{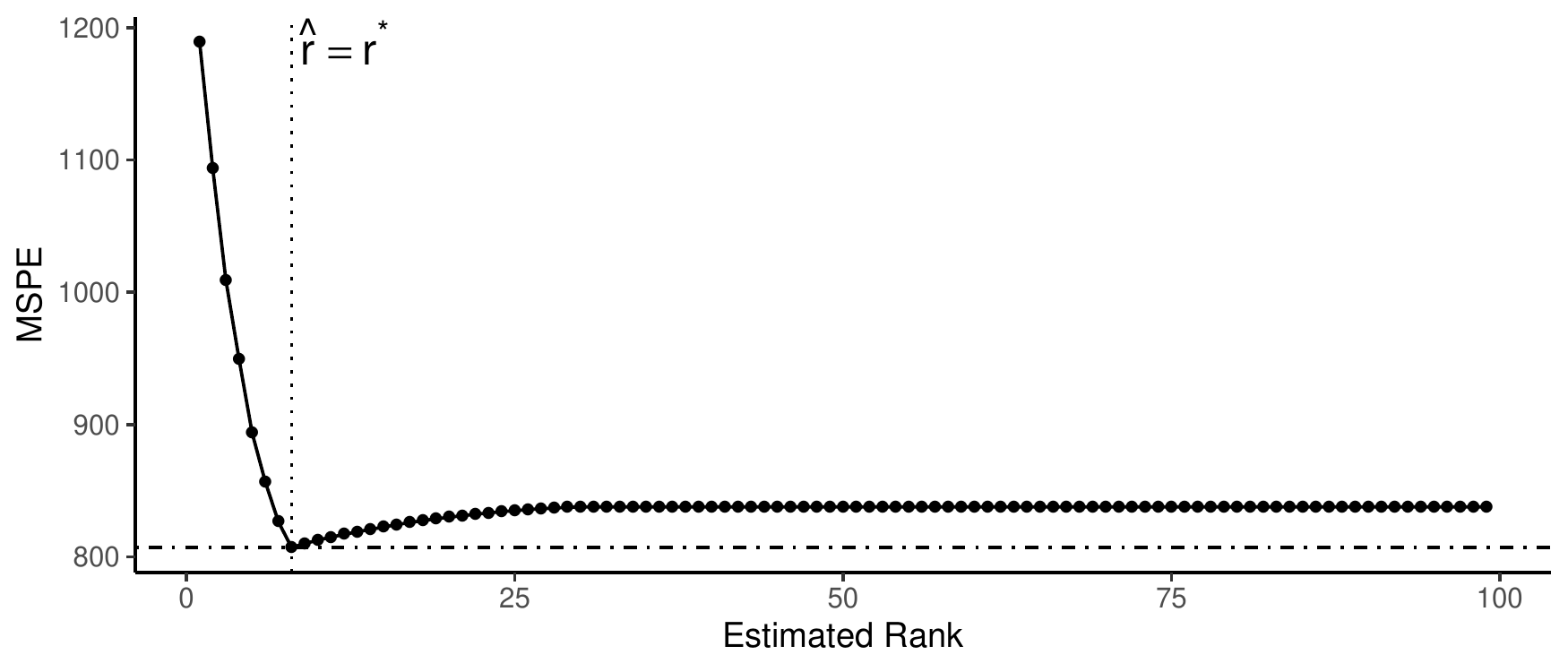}
\caption{Line graph of prediction error versus estimated rank. The dots represent the mean value of MSPE corresponding to each estimated rank. The ordinate of the dotdash line represents the minimum value of the mean MSPE, and the dotted line represents the true rank. 
}
\label{fig:rd_pe}
\end{figure}


\subsubsection{Prediction performance}

In this subsection, we evaluate the prediction performance of different rank determination methods. We adopt the same simulation settings in the main text to generate the training and test data sets. For each estimated coefficient matrix $\hat{C}$, we calculate the mean squared prediction error (MSPE) on the test data $\{\widetilde{X},\widetilde{Y}\}$ with a refitted estimator \citep{izenman2008modern}. The simulation results are summarized in Table~\ref{tab:pe}. Our method still performs well with the smallest MSPE in most cases although the difference of the MSPE's is small. This is partially because that a slightly overestimated rank may not cause a significant increase of MSPE, while an underestimated rank often leads to a substantial increase of MSPE (see Section B.1.1). 
\iftrue
\begin{table}[htbp]
	\caption{The prediction performance in the simulation study.}
	\label{tab:pe}
	\centering
		\begin{tabular}{cccccccc}
			\hline
			\hline
			 SNR & AIC & BIC & GIC & BICP & GCV & CV  & StARS-RRR\\
			\hline
			\multicolumn{8}{c}{Model \uppercase\expandafter{\romannumeral1}, $\rho = 0.1$} \\
			1.07 & 237.49 & 238.42 & 240.13 & 239.57 & 237.46 & 237.48 & 237.45 \\ 
              1.60 & 404.68 & 405.12 & 406.10 & 405.67 & 404.67 & 404.68 & 404.61 \\ 
              1.85 & 505.66 & 505.88 & 506.75 & 506.48 & 505.65 & 505.66 & 505.56 \\ 
              2.14 & 638.93 & 639.01 & 639.56 & 639.34 & 638.92 & 638.91 & 638.79 \\ 
              2.49 & 832.29 & 832.25 & 832.56 & 832.50 & 832.27 & 832.27 & 832.14 \\ 
              3.03 & 1178.12 & 1177.96 & 1178.14 & 1178.10 & 1178.10 & 1178.10 & 1177.99 \\
              \hline
              \multicolumn{8}{c}{Model \uppercase\expandafter{\romannumeral1}, $\rho = 0.5$} \\
              
              1.10 & 236.38 & 237.28 & 239.09 & 238.47 & 236.35 & 236.39 & 236.33 \\ 
              1.26 & 276.80 & 277.57 & 279.08 & 278.54 & 276.78 & 276.81 & 276.65 \\ 
              1.57 & 373.82 & 373.99 & 375.30 & 374.68 & 373.79 & 373.81 & 373.72 \\ 
              2.20 & 632.61 & 632.60 & 633.06 & 632.91 & 632.57 & 632.61 & 632.44 \\ 
              2.52 & 794.34 & 794.27 & 794.64 & 794.53 & 794.32 & 794.34 & 794.19 \\ 
              2.99 & 1077.40 & 1077.28 & 1077.43 & 1077.38 & 1077.38 & 1077.40 & 1077.26 \\ 
              \hline
              \multicolumn{8}{c}{Model \uppercase\expandafter{\romannumeral1}, $\rho = 0.9$} \\
              1.08 & 217.00 & 218.01 & 220.14 & 219.43 & 216.96 & 217.01 & 216.97 \\ 
              1.24 & 251.14 & 252.00 & 253.78 & 253.24 & 251.11 & 251.13 & 251.04 \\ 
              1.55 & 333.13 & 333.74 & 334.89 & 334.53 & 333.11 & 333.12 & 333.01 \\ 
              2.09 & 520.52 & 520.53 & 521.52 & 521.08 & 520.47 & 520.51 & 520.26 \\ 
              2.71 & 803.05 & 802.84 & 803.26 & 803.21 & 803.02 & 803.01 & 802.79 \\ 
              3.10 & 1016.65 & 1016.43 & 1016.89 & 1016.60 & 1016.62 & 1016.64 & 1016.41 \\ 
              \hline\hline
              \multicolumn{8}{c}{Model \uppercase\expandafter{\romannumeral2}, $\rho = 0.1$} \\
              1.16 & 216.13 & 220.40 & 244.57 & 244.08 & 214.26 & 214.13 & 216.18 \\ 
              1.45 & 276.26 & 277.72 & 322.79 & 319.01 & 274.37 & 274.20 & 274.12 \\ 
              1.73 & 350.09 & 349.45 & 412.87 & 401.34 & 348.09 & 347.90 & 347.87 \\ 
              2.02 & 437.40 & 435.71 & 509.47 & 486.28 & 435.30 & 435.11 & 435.09 \\ 
              2.60 & 652.50 & 649.91 & 705.01 & 666.10 & 650.11 & 649.90 & 649.89 \\ 
              3.18 & 921.15 & 918.44 & 924.64 & 920.76 & 918.70 & 918.46 & 918.44 \\
              \hline
              \multicolumn{8}{c}{Model \uppercase\expandafter{\romannumeral2}, $\rho = 0.5$} \\
              1.12 & 201.41 & 206.85 & 239.20 & 238.29 & 198.72 & 198.58 & 202.40 \\ 
              1.40 & 251.10 & 253.57 & 311.83 & 305.16 & 248.50 & 248.25 & 248.40 \\ 
              1.68 & 312.25 & 312.04 & 392.34 & 377.91 & 309.57 & 309.31 & 309.26 \\ 
              2.09 & 425.08 & 422.60 & 518.26 & 481.77 & 422.25 & 422.00 & 422.03 \\ 
              2.51 & 563.07 & 559.91 & 631.77 & 587.19 & 560.18 & 559.90 & 559.92 \\ 
              3.07 & 786.30 & 782.79 & 793.90 & 786.27 & 783.17 & 782.80 & 782.84 \\ 
              \hline
              \multicolumn{8}{c}{Model \uppercase\expandafter{\romannumeral2}, $\rho = 0.9$} \\
              1.05 & 170.88 & 175.56 & 247.20 & 235.18 & 165.69 & 165.46 & 167.53 \\ 
              1.24 & 188.74 & 191.17 & 277.27 & 257.08 & 183.59 & 183.19 & 183.14 \\ 
              1.52 & 222.05 & 221.85 & 313.44 & 283.26 & 216.62 & 216.08 & 215.89 \\ 
              2.00 & 293.59 & 288.82 & 360.26 & 332.62 & 287.77 & 287.21 & 287.19 \\ 
              2.48 & 384.67 & 378.16 & 415.47 & 400.91 & 378.65 & 378.01 & 378.02 \\ 
              3.05 & 519.68 & 512.57 & 525.72 & 519.54 & 513.33 & 512.62 & 512.66 \\ 
			\hline
			\hline
	\end{tabular}
\end{table}
\fi

\subsection{Effective Rank}

On the one hand, the true rank of a $p\times q$ matrix $C$ is defined as
$$
r^*=\text{rank}(C)=\sum_{i=1}^{p\wedge q}I(d_i(C) > 0),
$$
where $d_i(C)$ is the $i$th singular value of $C$, $i=1,\ldots,p\wedge q$. On the other hand, according to \cite{bunea2011optimal}, the effective rank of the coefficient matrix $C$ in reduced-rank regression is defined as 
$$ 
s=\sum_{i=1}^{p \wedge q}I(d_i(XC)>d_1(PE)),
$$
where $X$, $P$, and $E$ are the design matrix, projection matrix, and error matrix in reduced-rank regression, respectively.

For any two matrices $A, B$ with suitable dimensions, we have that
$$
\text{rank}(AB)\leq \text{rank}(A)\wedge\text{rank}(B).
$$
Therefore,
$$
s=\sum_{i=1}^{p \wedge q}I(d_i(XC)>d_1(PE))\leq\text{rank}(XC)\leq\text{rank}(C)=r^*.
$$ 
Furthermore, by the definition of the signal-to-noise ratio: $\text{SNR}=d_{r^*}(XC)/d_1(PE)$, we have $d_{r^*}(XC)>d_1(PE)$ when $\text{SNR}>1 $. Since $d_{i}(XC)\geq d_{r^*}(XC)$ holds for any $i=1,\ldots ,r^*$, we know that $s \ge r^*$ when $\text{SNR}>1 $. In conclusion, the effective rank is equal to the true rank when $\text{SNR}>1$.

However, in the simulation studies of the main text, the effective rank \citep{bunea2011optimal} and the true rank of the coefficient matrix are not necessarily identical. This is because not all simulated data have a $\text{SNR}>1$, although the average SNR across the replicates is reported to be greater than 1 in all simulation settings (see the SNR column in Table 2).

Therefore, we additionally report the effective-rank recovery ratio. Table~\ref{tab:effr}~reports the effective-rank recovery ratio under the same simulations settings as in the main text. We can see that StARS-RRR has the highest effective-rank recovery ratio in all cases.

\iftrue
\begin{table}[htbp]
	\caption{The effective-rank recovery ratio in the simulation study.}
	\label{tab:effr}
	\centering
		\begin{tabular}{cccccccc}
			\hline
			\hline
			 SNR & AIC & BIC & GIC & BICP & GCV & CV  & StARS-RRR\\
			\hline
			\multicolumn{8}{c}{Model \uppercase\expandafter{\romannumeral1}, $\rho = 0.1$} \\
            		
              1.07 & 0.51 & 0.45 & 0.15 & 0.20 & 0.52 & 0.52 & 0.80 \\ 
              1.60 & 0.78 & 0.72 & 0.45 & 0.51 & 0.80 & 0.79 & 0.93 \\ 
              1.85 & 0.83 & 0.80 & 0.57 & 0.64 & 0.84 & 0.84 & 0.96 \\ 
              2.14 & 0.85 & 0.88 & 0.70 & 0.76 & 0.86 & 0.86 & 0.99 \\ 
              2.49 & 0.85 & 0.95 & 0.84 & 0.87 & 0.87 & 0.87 & 0.99 \\ 
              3.03 & 0.85 & 0.99 & 0.93 & 0.96 & 0.87 & 0.87 & 0.99 \\ 
              
            \hline
			\multicolumn{8}{c}{Model \uppercase\expandafter{\romannumeral1}, $\rho = 0.5$} \\
              1.10 & 0.57 & 0.53 & 0.19 & 0.25 & 0.58 & 0.55 & 0.83 \\ 
              1.26 & 0.64 & 0.60 & 0.29 & 0.37 & 0.65 & 0.64 & 0.88 \\ 
              1.57 & 0.77 & 0.71 & 0.44 & 0.51 & 0.78 & 0.78 & 0.91 \\ 
              2.20 & 0.84 & 0.89 & 0.71 & 0.76 & 0.86 & 0.85 & 0.98 \\ 
              2.52 & 0.86 & 0.94 & 0.81 & 0.85 & 0.87 & 0.87 & 0.99 \\ 
              2.99 & 0.86 & 0.97 & 0.91 & 0.93 & 0.88 & 0.86 & 0.99 \\ 
			\hline
			\multicolumn{8}{c}{Model \uppercase\expandafter{\romannumeral1}, $\rho = 0.9$} \\
              1.08 & 0.52 & 0.50 & 0.20 & 0.26 & 0.53 & 0.55 & 0.81 \\ 
              1.24 & 0.62 & 0.58 & 0.27 & 0.34 & 0.64 & 0.64 & 0.86 \\ 
              1.55 & 0.73 & 0.71 & 0.44 & 0.52 & 0.75 & 0.75 & 0.94 \\ 
              2.09 & 0.82 & 0.87 & 0.67 & 0.74 & 0.85 & 0.84 & 0.98 \\ 
              2.71 & 0.85 & 0.95 & 0.85 & 0.89 & 0.87 & 0.87 & 0.99 \\ 
              3.10 & 0.85 & 0.98 & 0.91 & 0.94 & 0.87 & 0.86 & 1.00 \\
            \hline
			\hline
			\multicolumn{8}{c}{Model \uppercase\expandafter{\romannumeral2}, $\rho = 0.1$} \\              
              1.16 & 0.37 & 0.00 & 0.00 & 0.00 & 0.71 & 0.75 & 0.75 \\ 
              1.45 & 0.48 & 0.13 & 0.00 & 0.00 & 0.89 & 0.95 & 0.99 \\ 
              1.73 & 0.50 & 0.54 & 0.00 & 0.00 & 0.93 & 0.99 & 1.00 \\ 
              2.02 & 0.50 & 0.83 & 0.00 & 0.00 & 0.93 & 0.99 & 1.00 \\ 
              2.60 & 0.47 & 0.99 & 0.05 & 0.24 & 0.93 & 0.99 & 1.00 \\ 
              3.18 & 0.45 & 1.00 & 0.65 & 0.80 & 0.92 & 0.99 & 1.00 \\ 
			\hline
			\multicolumn{8}{c}{Model \uppercase\expandafter{\romannumeral2}, $\rho = 0.5$} \\              
              1.12 & 0.35 & 0.01 & 0.00 & 0.00 & 0.69 & 0.71 & 0.71 \\ 
              1.40 & 0.47 & 0.07 & 0.00 & 0.00 & 0.87 & 0.92 & 0.94 \\ 
              1.68 & 0.51 & 0.45 & 0.00 & 0.00 & 0.92 & 0.98 & 1.00 \\ 
              2.09 & 0.52 & 0.88 & 0.00 & 0.00 & 0.94 & 0.99 & 1.00 \\ 
              2.51 & 0.52 & 0.99 & 0.02 & 0.17 & 0.93 & 0.99 & 1.00 \\ 
              3.07 & 0.48 & 1.00 & 0.55 & 0.75 & 0.91 & 1.00 & 1.00 \\ 
			\hline
			\multicolumn{8}{c}{Model \uppercase\expandafter{\romannumeral2}, $\rho = 0.9$} \\              
              1.05 & 0.26 & 0.02 & 0.00 & 0.00 & 0.51 & 0.53 & 0.63 \\ 
              1.24 & 0.41 & 0.05 & 0.00 & 0.00 & 0.77 & 0.82 & 0.87 \\ 
              1.52 & 0.47 & 0.27 & 0.00 & 0.00 & 0.88 & 0.94 & 0.98 \\ 
              2.00 & 0.49 & 0.82 & 0.00 & 0.01 & 0.91 & 0.99 & 1.00 \\ 
              2.48 & 0.49 & 0.98 & 0.15 & 0.25 & 0.90 & 0.99 & 1.00 \\ 
              3.05 & 0.47 & 1.00 & 0.61 & 0.74 & 0.90 & 0.99 & 1.00 \\ 
			\hline
			\hline
	\end{tabular}
\end{table}
\fi

\subsection{Sensitivity Analyses}

The choice of hyperparameters is important for StARS-RRR. Suggested by the referees, we conduct sensitivity analyses for the hyperparameters in StARS-RRR, including the threshold $\eta$, the number of subsamples $N$, and the size of subsamples $b$. Below are the values of $\eta$, $N$, and $b$ that are examined in each sensitivity analysis.
\begin{itemize}
    \item Sensitivity analysis of $\eta$: $\eta_1=0.001$, $\eta_2=0.01$, $\eta_3=0.02$, $\eta_4=0.05$
    \item Sensitivity analysis of $N$: $N_1=25$, $N_2=50$, $N_3=100$, $N_4=200$
    \item Sensitivity analysis of $b$:
    \begin{itemize}
    \item Model I: $b_1=\lfloor 10\sqrt{n}\rfloor$, $b_2=0.5n$, $b_3=0.7n$, $b_4=0.9n$
    \item Model II: $b_1=0.4n$, $b_2=0.5n$, $b_3=0.7n$, $b_4=0.9n$
    \end{itemize}
\end{itemize}

When one hyperparameter is examined in each sensitivity analysis, the other two hyperparameters are fixed at their values that are used in the main text: $\eta = 0.001$, $b = 0.7n$, and $N = 100$.

Except the choices of $\eta$, $N$, and $b$, the other simulation settings are exactly the same as the simulation studies in the main text. 

\subsubsection{Sensitivity analysis for $\eta$}

The results of the sensitivity analysis of $\eta$ are summarzied in Table~\ref{tab:eta}. We find that when SNR is very small, a slightly larger value of $\eta$ could lead to a better rank estimation when the signal-to-noise ratio is very small. A larger value of $\eta$ leads to a smaller tuning parameter, increasing the the possibility of an overestimated rank. The simulation studies in the main text show that StARS-RRR tends to underestimate the rank when the SNR is very small. Therefore, a slightly larger value of $\eta$ could alleviate this tendency, resulting in a more accurately estimated rank. By contrast, when SNR is sufficiently large, the smallest value of $\eta$ seems to outperform the others by a small margin.

In practice when we have no prior knowledge of the signal-to-noise ratio, we recommend using a small value of $\eta$ to be more conservative on the instability measure.


	\begin{table}[htbp]
	\centering
	\caption{Rank recovery (left), underestimate (middle), and overestimate (right) ratios (in percentage) in the sensitive analysis of $ \eta $.}\label{tab:eta}
	\begin{tabular}{cccccc}
		\hline
		\hline
		s & SNR & $ \eta_1=0.001 $ & $ \eta_2=0.01 $ & $ \eta_3=0.02 $ & $ \eta_4=0.05 $ \\ 
		\hline
		\multicolumn{6}{c}{Model \uppercase\expandafter{\romannumeral1}, $\rho = 0.1$} \\
		40 & 1.42 & (86,14,0) & (90,9,0) & (91,8,1) & (93,5,2) \\ 
		45 & 1.6 & (92,8,0) & (95,5,0) & (96,4,1) & (95,3,2) \\ 
		52 & 1.85 & (95,4,0) & (97,2,0) & (97,2,1) & (96,1,2) \\ 
		60 & 2.14 & (98,2,0) & (98,1,0) & (98,1,1) & (97,1,2) \\ 
		70 & 2.49 & (98,1,0) & (98,1,0) & (98,1,1) & (98,0,2) \\ 
		85 & 3.03 & (99,0,0) & (99,0,1) & (99,0,1) & (98,0,2) \\ 
		\hline
		\multicolumn{6}{c}{Model \uppercase\expandafter{\romannumeral1}, $\rho = 0.5$} \\
		35 & 1.1 & (63,37,0) & (70,29,1) & (72,26,1) & (79,18,3) \\ 
		40 & 1.26 & (73,26,1) & (79,20,1) & (81,17,1) & (86,11,3) \\ 
		50 & 1.57 & (89,10,1) & (93,6,1) & (93,6,1) & (93,4,3) \\ 
		70 & 2.2 & (97,3,1) & (97,2,1) & (97,1,1) & (96,1,3) \\ 
		80 & 2.52 & (99,1,1) & (98,1,1) & (98,1,1) & (97,0,3) \\ 
		95 & 2.99 & (99,1,1) & (98,1,1) & (98,0,1) & (97,0,3) \\ 
		\hline
		\multicolumn{6}{c}{Model \uppercase\expandafter{\romannumeral1}, $\rho = 0.9$} \\
		70 & 1.08 & (63,37,0) & (67,32,0) & (72,27,1) & (76,19,5) \\ 
		80 & 1.24 & (73,27,0) & (79,21,0) & (82,17,1) & (82,13,5) \\ 
		100 & 1.55 & (86,14,0) & (90,10,0) & (92,7,1) & (91,4,5) \\ 
		135 & 2.09 & (97,3,0) & (97,2,1) & (97,2,1) & (95,1,5) \\ 
		175 & 2.71 & (99,1,0) & (99,0,1) & (98,0,1) & (95,0,5) \\ 
		200 & 3.1 & (99,0,0) & (99,0,1) & (98,0,1) & (95,0,5) \\
		\hline
		\hline
		\multicolumn{6}{c}{Model \uppercase\expandafter{\romannumeral2}, $\rho = 0.1$} \\
		8 & 1.16 & (77,23,0) & (88,12,0) & (91,8,0) & (93,4,3) \\ 
		10 & 1.45 & (98,2,0) & (99,1,0) & (99,1,0) & (97,0,3) \\ 
		12 & 1.73 & (100,0,0) & (100,0,0) & (100,0,0) & (97,0,3) \\ 
		14 & 2.02 & (100,0,0) & (100,0,0) & (100,0,0) & (97,0,3) \\ 
		18 & 2.6 & (100,0,0) & (100,0,0) & (100,0,0) & (97,0,3) \\ 
		22 & 3.18 & (100,0,0) & (100,0,0) & (100,0,0) & (97,0,3) \\ 
		\hline
		\multicolumn{6}{c}{Model \uppercase\expandafter{\romannumeral2}, $\rho = 0.5$} \\
		8 & 1.12 & (69,30,0) & (82,17,0) & (89,11,1) & (93,5,2) \\ 
		10 & 1.4 & (95,4,0) & (98,2,0) & (98,1,1) & (98,0,2) \\ 
		12 & 1.68 & (100,0,0) & (100,0,0) & (100,0,0) & (97,0,3) \\ 
		15 & 2.09 & (100,0,0) & (100,0,0) & (100,0,0) & (97,0,3) \\ 
		18 & 2.51 & (100,0,0) & (100,0,0) & (100,0,0) & (97,0,3) \\ 
		22 & 3.07 & (100,0,0) & (100,0,0) & (100,0,0) & (97,0,3) \\
		\hline
		\multicolumn{6}{c}{Model \uppercase\expandafter{\romannumeral2}, $\rho = 0.9$} \\
		11 & 1.05 & (58,42,0) & (73,27,0) & (77,21,1) & (89,8,3) \\ 
		13 & 1.24 & (86,14,0) & (94,5,1) & (96,2,1) & (96,1,3) \\ 
		16 & 1.52 & (99,1,0) & (99,0,1) & (99,0,1) & (97,0,3) \\ 
		21 & 2 & (100,0,0) & (99,0,1) & (99,0,1) & (96,0,4) \\ 
		26 & 2.48 & (100,0,0) & (99,0,1) & (99,0,1) & (97,0,3) \\ 
		32 & 3.05 & (100,0,0) & (99,0,1) & (99,0,1) & (97,0,3) \\ 
		\hline
		\hline
	\end{tabular}
\end{table}

\subsubsection{Sensitivity analysis for $N$}

The results of the sensitivity analysis for $N$ are summarized in Table ~\ref{tab:bags}. We find that the rank estimation is not very sensitive to the choice of $N$ as long as it is not too small, say, 25. We recommend using 50 or 100, which balances the stability of estimating the sample variance of the estimated ranks and the computational cost.

\begin{table}[htbp]
	\centering
	\caption{Rank recovery (left), underestimate (middle), and overestimate (right) ratios (in percentage) in the sensitive analysis of $ N $.}
	\label{tab:bags}
	\begin{tabular}{cccccc}
		\hline
		\hline
		s & SNR & $ N_1=25 $ & $ N_2=50 $ & $ N_3=100 $ & $ N_4=200 $ \\ 		   
		\hline
		\multicolumn{6}{c}{Model \uppercase\expandafter{\romannumeral1}, $\rho = 0.1$} \\
		40 & 1.48 & (93,3,4) & (93,7,0) & (89,11,0) & (86,14,0) \\ 
		45 & 1.67 & (92,2,6) & (95,5,0) & (93,7,0) & (93,7,0) \\ 
		52 & 1.93 & (94,1,5) & (99,1,0) & (96,4,0) & (95,5,0) \\ 
		60 & 2.22 & (95,0,5) & (100,0,0) & (100,0,0) & (100,0,0) \\ 
		70 & 2.6 & (93,0,7) & (100,0,0) & (100,0,0) & (100,0,0) \\ 
		85 & 3.15 & (96,0,4) & (100,0,0) & (100,0,0) & (100,0,0) \\
		\hline
		\multicolumn{6}{c}{Model \uppercase\expandafter{\romannumeral1}, $\rho = 0.5$} \\
		35 & 1.15 & (72,16,12) & (71,29,0) & (64,36,0) & (61,39,0) \\ 
		40 & 1.32 & (82,7,11) & (84,16,0) & (77,23,0) & (74,26,0) \\ 
		50 & 1.65 & (86,4,10) & (93,6,1) & (92,8,0) & (90,10,0) \\ 
		70 & 2.31 & (91,0,9) & (99,0,1) & (98,2,0) & (97,3,0) \\ 
		80 & 2.64 & (90,0,10) & (99,0,1) & (100,0,0) & (100,0,0) \\ 
		95 & 3.13 & (91,0,9) & (100,0,0) & (100,0,0) & (100,0,0) \\
		\hline
		\multicolumn{6}{c}{Model \uppercase\expandafter{\romannumeral1}, $\rho = 0.9$} \\
		70 & 1.13 & (72,18,10) & (69,31,0) & (66,34,0) & (61,39,0) \\ 
		80 & 1.29 & (76,12,12) & (79,20,1) & (76,24,0) & (70,30,0) \\ 
		100 & 1.62 & (86,5,9) & (89,10,1) & (85,15,0) & (84,16,0) \\ 
		135 & 2.18 & (88,1,11) & (98,1,1) & (98,2,0) & (96,4,0) \\ 
		175 & 2.83 & (89,1,10) & (97,1,2) & (99,1,0) & (99,1,0) \\ 
		200 & 3.23 & (89,0,11) & (98,0,2) & (99,1,0) & (99,1,0) \\  
		\hline
		\hline
		\multicolumn{6}{c}{Model \uppercase\expandafter{\romannumeral2}, $\rho = 0.1$} \\
		8 & 1.15 & (91,6,3) & (86,14,0) & (78,22,0) & (67,33,0) \\ 
		10 & 1.44 & (96,0,4) & (97,3,0) & (96,4,0) & (96,4,0) \\ 
		12 & 1.73 & (97,0,3) & (100,0,0) & (100,0,0) & (100,0,0) \\ 
		14 & 2.02 & (97,0,3) & (100,0,0) & (100,0,0) & (100,0,0) \\ 
		18 & 2.59 & (98,0,2) & (100,0,0) & (100,0,0) & (100,0,0) \\ 
		22 & 3.17 & (98,0,2) & (100,0,0) & (100,0,0) & (100,0,0) \\ 
		\hline
		\multicolumn{6}{c}{Model \uppercase\expandafter{\romannumeral2}, $\rho = 0.5$} \\
		8 & 1.1 & (90,9,1) & (78,22,0) & (64,36,0) & (51,49,0) \\ 
		10 & 1.38 & (99,0,1) & (98,2,0) & (96,4,0) & (95,5,0) \\ 
		12 & 1.65 & (99,0,1) & (100,0,0) & (100,0,0) & (100,0,0) \\ 
		15 & 2.07 & (99,0,1) & (100,0,0) & (100,0,0) & (100,0,0) \\ 
		18 & 2.48 & (99,0,1) & (100,0,0) & (100,0,0) & (100,0,0) \\ 
		22 & 3.03 & (100,0,0) & (100,0,0) & (100,0,0) & (100,0,0) \\  
		\hline
		\multicolumn{6}{c}{Model \uppercase\expandafter{\romannumeral2}, $\rho = 0.9$} \\
		11 & 1.04 & (81,17,2) & (72,28,0) & (63,37,0) & (49,51,0) \\ 
		13 & 1.23 & (93,4,3) & (91,9,0) & (83,17,0) & (79,21,0) \\ 
		16 & 1.51 & (97,0,3) & (100,0,0) & (98,2,0) & (97,3,0) \\ 
		21 & 1.98 & (97,0,3) & (100,0,0) & (100,0,0) & (100,0,0) \\ 
		26 & 2.45 & (97,0,3) & (100,0,0) & (100,0,0) & (100,0,0) \\ 
		32 & 3.02 & (97,0,3) & (100,0,0) & (100,0,0) & (100,0,0) \\ 
		\hline
		\hline
	\end{tabular}
\end{table}

\subsubsection{Sensitivity analysis for $b$}

The results of the sensitivity analysis of $b$ are summarized in Table ~\ref{tab:sub}. The accuracy of the rank estimation is closely related to the size of the subsample. From the results, the overall performance of the rank estimation is the best when $b = 0.7n$ considering both Models I and II. In fact, the performance of StARS-RRR is satisfactory as long as $b$ is not too small, such as $b=\lfloor 10\sqrt{n}\rfloor $ or $b = 0.4n$.

\begin{table}[htbp]
	\centering
	\caption{Rank recovery (left), underestimate (middle), and overestimate (right) ratios (in percentage) in the sensitive analysis of $ b $.}
	\label{tab:sub}
	\begin{tabular}{cccccc}
		\hline
		\hline
		s & SNR & $ b_1=\lfloor 10\sqrt{n}\rfloor $ & $ b_2=0.5n $ & $ b_3=0.7n $ & $ b_4=0.9n$ \\ 		   
		\hline
		\multicolumn{6}{c}{Model \uppercase\expandafter{\romannumeral1}, $\rho = 0.1$} \\
		40 & 1.48 & (80,19,1) & (78,21,1) & (89,11,0) & (84,1,15) \\ 
		45 & 1.67 & (88,11,1) & (90,10,0) & (93,7,0) & (84,0,16) \\ 
		52 & 1.93 & (95,5,0) & (93,7,0) & (96,4,0) & (87,0,13) \\ 
		60 & 2.22 & (99,1,0) & (99,1,0) & (100,0,0) & (84,0,16) \\ 
		70 & 2.6 & (100,0,0) & (100,0,0) & (100,0,0) & (84,0,16) \\ 
		85 & 3.15 & (100,0,0) & (100,0,0) & (100,0,0) & (86,0,14) \\ 
		\hline
		\multicolumn{6}{c}{Model \uppercase\expandafter{\romannumeral1}, $\rho = 0.5$} \\
		35 & 1.15 & (46,53,1) & (50,50,0) & (64,36,0) & (80,10,10) \\ 
		40 & 1.32 & (70,29,1) & (69,31,0) & (77,23,0) & (86,4,10) \\ 
		50 & 1.65 & (87,13,0) & (86,14,0) & (92,8,0) & (88,0,12) \\ 
		70 & 2.31 & (99,1,0) & (99,1,0) & (98,2,0) & (86,0,14) \\ 
		80 & 2.64 & (100,0,0) & (100,0,0) & (100,0,0) & (85,0,15) \\ 
		95 & 3.13 & (100,0,0) & (100,0,0) & (100,0,0) & (87,0,13) \\ 
		\hline
		\multicolumn{6}{c}{Model \uppercase\expandafter{\romannumeral1}, $\rho = 0.9$} \\
		70 & 1.13 & (44,54,2) & (53,46,1) & (66,34,0) & (78,10,12) \\ 
		80 & 1.29 & (63,35,2) & (63,36,1) & (76,24,0) & (80,7,13) \\ 
		100 & 1.62 & (79,19,2) & (83,16,1) & (85,15,0) & (87,2,11) \\ 
		135 & 2.18 & (94,4,2) & (95,4,1) & (98,2,0) & (87,1,12) \\ 
		175 & 2.83 & (97,1,2) & (97,2,1) & (99,1,0) & (89,1,10) \\ 
		200 & 3.23 & (97,1,2) & (98,1,1) & (99,1,0) & (90,0,10) \\ 
		\hline
		\hline
		s & SNR & $ b_1=0.4n $ & $ b_2=0.5n $ & $ b_3=0.7n $ & $ b_4=0.9n$ \\
		\hline
		\multicolumn{6}{c}{Model \uppercase\expandafter{\romannumeral2}, $\rho = 0.1$} \\
		8 & 1.15 & (0,100,0) & (8,90,2) & (78,22,0) & (97,3,0) \\ 
		10 & 1.44 & (5,94,1) & (47,51,2) & (96,4,0) & (100,0,0) \\ 
		12 & 1.73 & (42,57,1) & (87,11,2) & (100,0,0) & (100,0,0) \\ 
		14 & 2.02 & (74,26,0) & (96,2,2) & (100,0,0) & (100,0,0) \\ 
		18 & 2.59 & (98,2,0) & (98,0,2) & (100,0,0) & (100,0,0) \\ 
		22 & 3.17 & (100,0,0) & (98,0,2) & (100,0,0) & (100,0,0) \\ 
		\hline
		\multicolumn{6}{c}{Model \uppercase\expandafter{\romannumeral2}, $\rho = 0.5$} \\
		8 & 1.1 & (0,99,1) & (4,96,0) & (64,36,0) & (98,2,0) \\ 
		10 & 1.38 & (4,95,1) & (39,61,0) & (96,4,0) & (100,0,0) \\ 
		12 & 1.65 & (26,73,1) & (82,18,0) & (100,0,0) & (100,0,0) \\ 
		15 & 2.07 & (80,19,1) & (100,0,0) & (100,0,0) & (99,0,1) \\ 
		18 & 2.48 & (96,3,1) & (100,0,0) & (100,0,0) & (100,0,0) \\ 
		22 & 3.03 & (99,0,1) & (100,0,0) & (100,0,0) & (100,0,0) \\  
		\hline
		\multicolumn{6}{c}{Model \uppercase\expandafter{\romannumeral2}, $\rho = 0.9$} \\
		11 & 1.04 & (1,98,1) & (8,90,2) & (63,37,0) & (91,9,0) \\ 
		13 & 1.23 & (8,91,1) & (21,77,2) & (83,17,0) & (99,1,0) \\ 
		16 & 1.51 & (26,72,2) & (69,29,2) & (98,2,0) & (100,0,0) \\ 
		21 & 1.98 & (75,23,2) & (96,2,2) & (100,0,0) & (100,0,0) \\ 
		26 & 2.45 & (96,2,2) & (98,0,2) & (100,0,0) & (100,0,0) \\ 
		32 & 3.02 & (98,0,2) & (98,0,2) & (100,0,0) & (100,0,0) \\  
		\hline
		\hline
	\end{tabular}
\end{table}

\bibliographystyle{apalike}
\bibliography{reference}